\documentclass[preprint,12pt]{elsarticle}

\usepackage{amsmath,amssymb,amsthm}
\usepackage{booktabs}
\usepackage{multirow}
\usepackage{enumitem}
\usepackage{graphicx}
\usepackage{algorithm}
\usepackage{algorithmic}
\usepackage{hyperref}

\newtheorem{theorem}{Theorem}
\newtheorem{lemma}{Lemma}
\newtheorem{assumption}{Assumption}
\newtheorem{definition}{Definition}
\newtheorem{proposition}{Proposition}
\newtheorem{remark}{Remark}
\newtheorem{corollary}{Corollary}

\begin{document}

\begin{frontmatter}

\title{Drift-Aware Spectral Conformal Prediction for Non-Exchangeable Streaming Data}

\author[utrgv]{Jeffery Opoku\corref{cor1}}
\ead{jeffery.opoku01@utrgv.edu}

\author[fiu]{David Banahene}
\ead{abanahene54@gmail.com}

\cortext[cor1]{Corresponding author.}

\address[utrgv]{The University of Texas Rio Grande Valley, Edinburg, TX, USA}
\address[fiu]{Florida International University, Miami, FL, USA}

\begin{abstract}
Conformal prediction gives distribution-free prediction intervals under exchangeability, but real data streams---electricity demand, weather, financial volatility---break this assumption through recurring regimes, sudden shifts, and gradual drift. Current approaches either react to miscoverage after it happens (adaptive conformal inference) or reweight calibration residuals without checking if the reweighting still makes sense (weighted conformal prediction). We propose drift-aware spectral conformal prediction (DASC), which tackles both problems. DASC weights calibration residuals by their spectral similarity to the current data window, so that recurring regimes can share information across time. It tracks a drift score that flags when the weighted calibration pool no longer matches the current regime, and it adjusts the target miscoverage level online. Under a spectral Lipschitz condition on residual distributions---a testable property of the data, not of the method---we prove a per-step coverage bound whose bias term $L\bar{d}_t$ can be computed at prediction time and breaks down coverage loss into spectral distance, residual mismatch, and effective sample size. To the best of our knowledge, no prior conformal method gives both a per-step computable bound and a long-run calibration guarantee under a testable assumption. On real electricity data, DASC cuts average interval width by 28\% compared to the best calibrated baseline while keeping 90.3\% coverage. On hourly weather data, it produces intervals 42\% narrower than the closest calibrated baseline. On financial volatility data, DASC holds 87.6\% coverage during the turbulent 2022 period where rolling conformal falls to 74.9\%, and its drift score picks up the instability in real time. A forecaster sensitivity experiment shows that DASC helps most when the forecaster leaves regime structure in the residuals: swapping a seasonal lag for XGBoost closes the width gap, pinpointing exactly when DASC adds value.
\end{abstract}

\begin{keyword}
conformal prediction \sep streaming data \sep distribution shift \sep spectral methods \sep drift detection \sep adaptive calibration
\end{keyword}

\end{frontmatter}

\section{Introduction}

Conformal prediction is widely used for uncertainty quantification. Its appeal comes from flexibility: you can wrap it around nearly any prediction algorithm and get prediction sets with finite-sample coverage, provided the data are exchangeable \citep{vovk2005algorithmic,angelopoulos2021gentle}. This works regardless of whether the underlying predictor is a random forest, a gradient-boosting machine, a recurrent network, or a transformer. After a calibration step, the resulting intervals come with guaranteed coverage at any user-chosen level, no parametric model required.

The catch is that the best conformal guarantees need exchangeability. Streaming data are almost never exchangeable. Electricity demand, disease counts, climate records, traffic, financial returns---these all show temporal dependence, seasonality, recurring patterns, sudden jumps, and slow drifts. The prediction error at time $t$ is usually more closely tied to certain past periods than to the calibration set as a whole. A residual from last week's matching daily cycle might be highly relevant, while one from an unrelated regime could be misleading. So the central challenge in conformal prediction for time-indexed data is figuring out which past residuals still matter for the current prediction.

Several recent papers attack this problem from different angles. Methods like EnbPI build sequential intervals without requiring full exchangeability and give distribution-free bounds under weaker time-series conditions \citep{xu2020time}. Others reshape the conformal score, the calibration unit, or the geometry of the prediction set to handle multistep dependence, multivariate outputs, and temporal correlation \citep{sun2022copula,lee2025flow,cleaveland2023linear,xu2024ellipsoidal}. A related thread uses modern sequence models, graph structure, or reservoir-style representations to exploit temporal dependence when forming intervals \citep{auer2023hopfield,lee2024transformer,negli2025reservoir,cini2025relational}. These papers make clear that dependence is not just a nuisance---it can actually be informative for tighter uncertainty quantification.

A second strand of work focuses on adaptivity. Adaptive conformal inference adjusts the nominal miscoverage level online based on observed coverage errors, so it can keep long-run calibration even when the data-generating process shifts \citep{gibbs2021adaptive,zaffran2022adaptive}. PID-style methods use proportional, integral, and derivative feedback to stabilize coverage in sequential forecasting under seasonality, trends, and distribution shifts \citep{angelopoulos2023pid}. Betting-based variants further reduce dependence on manually chosen learning rates \citep{bastani2024betting}. These methods work well precisely because they require few assumptions about the data stream. The downside is that they are reactive: they only correct after errors pile up, and they do not tell you which calibration residuals are most relevant right now.

A third direction tackles conformal prediction under distribution shift. Weighted conformal prediction reweights calibration residuals using importance weights or density ratios \citep{tibshirani2019conformal}. More recent work extends this to broader forms of shift, including Wasserstein-regularized conformal prediction, optimal-transport formulations for non-exchangeable data, and robustness under L\'evy--Prokhorov perturbations \citep{xu2025wasserstein,correia2025optimal,aolaritei2025levy}. These are relevant to streaming settings because they give formal ways of comparing calibration and test distributions, rather than just hoping the calibration set stays representative.

But a gap remains between these lines of work. Time-series conformal methods exploit temporal structure but do not check when that structure has drifted. Adaptive methods fix long-run coverage but only respond after errors accumulate. Distribution-shift methods compare calibration and test distributions but ignore recurring spectral patterns. In many real applications, all three problems show up at once: the data are non-exchangeable, the non-exchangeability has recurring structure, and that structure can shift. And none of the existing methods offer a per-step coverage bound that is both computable at prediction time and rests on assumptions you can actually check (Table~\ref{tab:theory-comparison}).

We propose drift-aware spectral conformal prediction (DASC) for this combined setting. The starting point is that local frequency content gives a useful summary of structured non-exchangeability. We compare the current data window with historical calibration windows using a banded periodogram---a compact representation of local Fourier power across a few frequency bands. Calibration residuals from windows whose spectral profile is close to the current one get higher weights through a Gaussian kernel on periodogram distance. This lets the method borrow strength from recurring regimes, even if those regimes happened far back in calendar time.

But spectral similarity on its own is not enough when the stream enters a genuinely new state. So the second piece is a drift diagnostic. We compare the current window's mean spectral representation with the weighted mean of the calibration pool. The resulting drift score tells us whether the spectrally selected calibration residuals still plausibly represent what is happening now. When the score is low, we trust spectrally similar historical residuals. When it is high, we shrink the calibration pool and lean more on recent data.

The third piece is adaptive calibration. After each outcome is observed, we update the target miscoverage level based on the observed coverage error. This corrects for temporary miscalibration in the weighted quantile. The full procedure thus combines forward-looking similarity search, distributional drift monitoring, and backward-looking coverage correction.

The practical question we are trying to answer is this: when should a streaming conformal predictor trust structurally similar past data, and when should it treat the present as something new? Our answer is to use spectral features to find recurring structure, a drift score to check if that structure is still distributionally reliable, and adaptive calibration to keep long-run coverage on track.

\paragraph{Contributions.}
We make five contributions.
\begin{enumerate}[label=(\roman*)]
    \item We introduce DASC, a drift-aware spectral weighting scheme for conformal prediction on non-exchangeable streaming data, with a concrete default spectral map (banded FFT periodogram) and a complete set of default hyperparameters.
    \item We define a drift diagnostic based on the $\ell_2$ gap between the current spectral window and the spectrally-weighted calibration mean, together with a drift-gated calibration window that shrinks the calibration pool when drift is large.
    \item We prove a per-step coverage bound (Theorem~\ref{thm:coverage}) under a testable spectral Lipschitz condition (Assumption~\ref{ass:lipschitz}) that describes the data-generating process, not the method. The bias $L\,\bar d_t$ is computable at prediction time, and we derive a fully data-driven version (Proposition~\ref{prop:data-dependent}) plus a long-run calibration guarantee (Theorem~\ref{thm:long-run}).
    \item We introduce an effective sample size diagnostic that shows when a locally weighted conformal threshold has enough support, completing a ``diagnostic triangle'' of coverage error, drift, and weight degeneracy.
    \item We test the method on synthetic streams (including five stress-test regimes), an external EnbPI/AgACI comparison, three real-data domains (electricity, weather, finance) with regime-level coverage analysis, a forecaster sensitivity experiment, and comparisons with density-ratio weighted and Wasserstein-robust conformal baselines.
\end{enumerate}

\paragraph{Main findings.}
\begin{enumerate}[label=(\roman*)]
    \item The spectral Lipschitz condition (Assumption~\ref{ass:lipschitz}) holds empirically, with the strongest support in the three real-data domains ($R^2 = 0.45$--$0.88$; Table~\ref{tab:lipschitz-verification}). The predicted coverage gap serves as a valid upper bound on the realized gap 100\% of the time, with positive rank correlations (Spearman $\rho \in [0.13, 0.29]$, Kendall $\tau \in [0.10, 0.20]$) and the tightest calibration in weather (ratio~$1.3$; Table~\ref{tab:gap-validation}).
    \item DASC splits reliability into three inspectable quantities: coverage error, spectral drift, and effective sample size. These reveal different failure modes---for instance, winter over-coverage in weather turns out not to be drift-driven, while 2022 under-coverage in finance lines up with elevated drift scores.
    \item Spectral-only and recency-weighted conformal calibration work fine in recurring regimes but break down after distributional drift. Post-shift, spectral-only coverage drops to $83.0\%$ and rolling to $86.2\%$, while DASC holds at $89.8\%$. Across five stress-test scenarios, DASC keeps coverage in $[0.899, 0.900]$.
    \item The drift gate tightens the theoretical coverage-loss bound when the drop in drift bias outweighs the increase in weighted-quantile variability.
    \item On household electricity data, DASC cuts interval width by roughly 28\% compared to the best calibrated non-DASC baseline while staying near nominal coverage. On hourly weather data, it achieves conservative coverage with intervals roughly 42\% narrower.
    \item On financial volatility, DASC stays near nominal and provides useful diagnostics. Year-level analysis shows rolling conformal falling to 74.9\% coverage in volatile years, while DASC holds at 87.6\% and flags the instability through its drift score.
    \item When the lag forecaster is swapped for XGBoost across all three domains (Table~\ref{tab:xgboost}), three patterns appear: in electricity, all methods converge to similar widths ($2.45$--$2.53$ kW); in weather, DASC keeps a 35\% width advantage; in finance, non-DASC methods become narrower. This shows exactly when DASC helps: its advantage persists when strong spectral structure remains in the residuals, regardless of the forecaster.
\end{enumerate}

\paragraph{Notation.}
Throughout, $\alpha \in (0,1)$ is the nominal miscoverage level, $C_t(X_t)$ is the prediction interval at time $t$, $\mathcal{C}_t$ is the calibration pool, and $w_{t,i}$ are the spectral weights for calibration point $i$. The spectral feature map $\phi\colon\mathbb{R}^\ell\to\mathbb{R}^K$ sends a window of length $\ell$ to a $K$-dimensional feature vector. The drift score $D_t$, effective sample size $n_{\mathrm{eff},t}$, and coverage error $E_t$ are the three diagnostics tracked online.

\section{Related Work}

This paper connects to four areas of the literature: conformal prediction for time series, adaptive online calibration, conformal prediction under distribution shift, and representation-based calibration for dependent data.

\paragraph{Conformal prediction for time series.}
Standard conformal prediction guarantees finite-sample marginal coverage when calibration and test data are exchangeable \citep{vovk2005algorithmic,angelopoulos2021gentle}. Because time-series observations are temporally dependent, a growing body of work adapts conformal methods by changing either the calibration scheme or the shape of the prediction set. \citet{xu2020time} introduced EnbPI, a sequential approach that avoids full retraining and gives distribution-free intervals under milder assumptions. Other extensions handle multi-step forecasting, functional data, multivariate outputs, and non-rectangular regions \citep{sun2022copula,cleaveland2023linear,xu2024ellipsoidal,lee2025flow}.

Our work has a different emphasis. Rather than treating dependence purely as an obstacle, we use local spectral structure to pick out the most relevant calibration residuals. This matters most in streams with recurring regimes: a residual from a distant but spectrally similar period can be more useful than a recent residual from a different regime.

\paragraph{Adaptive and online conformal calibration.}
Adaptive conformal inference adjusts the calibration threshold or target miscoverage level as new data arrive. \citet{gibbs2021adaptive} introduced an online update that modifies the target level based on observed coverage errors; \citet{zaffran2022adaptive} studied this in time-series settings. PID-based approaches use proportional, integral, and derivative feedback to stabilize coverage \citep{angelopoulos2023pid}. Betting-based versions reduce sensitivity to learning-rate choices \citep{bastani2024betting}.

These methods are powerful because they need few assumptions. Their downside is that they tend to be reactive: the correction kicks in only after miscoverage has already occurred. In streaming data with sudden or gradual drift, having a forward-looking diagnostic---one that asks whether the current regime still looks like the calibration pool---can help avoid errors before they accumulate. DASC keeps the adaptive update but adds a drift score that can shift the calibration weights proactively.

\paragraph{Distribution shift, weighting, and transport.}
Weighted conformal prediction handles situations where the calibration and test distributions differ in some structured way. Under covariate shift, calibration residuals can be reweighted using density ratios \citep{tibshirani2019conformal}. More recently, Wasserstein-regularized conformal prediction uses optimal-transport constraints for robustness under distributional perturbations \citep{xu2025wasserstein}. Optimal-transport conformal prediction uses unlabeled data to handle distribution shifts \citep{correia2025optimal}, and L\'evy--Prokhorov frameworks offer yet another angle \citep{aolaritei2025levy}.

This literature gives formal tools for measuring how far the data have moved from calibration. What we add is a local, streaming version of this comparison: a drift score computed between the current window and a weighted calibration distribution chosen by spectral similarity. We use the drift score operationally---not just as a worst-case robustness device, but as a real-time indicator of how much to trust spectrally similar historical residuals.

\paragraph{Representation-based calibration.}
A related line of work uses learned or engineered representations to improve conformal calibration under dependence. Hopfield networks, transformers, reservoirs, and graph-based methods all use some form of hidden state or structure to represent relevant temporal context before building prediction intervals \citep{auer2023hopfield,lee2024transformer,negli2025reservoir,cini2025relational}. The common thread is that calibration residuals should be compared in a space that reflects the structure of the problem, and we take the same view.

We use a deliberately simple and interpretable representation. Local spectral features capture frequency content, seasonality, and regime recurrence. They are natural for many scientific and economic time series, and they yield diagnostics that are easy to inspect. This complements deep sequence representations, especially when the goal is not just good prediction but also transparent uncertainty quantification.

\paragraph{Theoretical comparison.}
Table~\ref{tab:theory-comparison} compares the theoretical guarantees across methods. To our knowledge, DASC is the first to offer both a per-step coverage bound with a computable bias term and a long-run calibration guarantee, while resting on an assumption that can be tested from data.

\begin{table}[t]
\centering
\caption{Comparison of theoretical guarantees for conformal prediction under non-exchangeability. ``Instantaneous'' means a per-step bound; ``Long-run'' means $|\bar E_T - \alpha| \to 0$. ``Computable bound'' means the loss bound uses only prediction-time quantities. ``Verifiable assumption'' means the key condition can be tested without oracle access.}
\label{tab:theory-comparison}
\resizebox{\textwidth}{!}{%
\begin{tabular}{lcccc}
\toprule
Method & Instant. & Long-run & Comp.\ bound & Verif.\ assump. \\
\midrule
Split conformal & \checkmark & --- & \checkmark & Exchangeability \\
ACI \citep{gibbs2021adaptive} & --- & \checkmark & --- & None \\
Conformal PID \citep{angelopoulos2023pid} & --- & \checkmark & --- & Smoothness \\
Weighted CP \citep{tibshirani2019conformal} & \checkmark & --- & If ratios known & Covariate shift \\
EnbPI \citep{xu2020time} & Asymp. & Asymp. & --- & Stationarity + mixing \\
DASC (this paper) & \checkmark & \checkmark & \checkmark & Spectral Lipschitz \\
\bottomrule
\end{tabular}%
}
\end{table}

\paragraph{Scope.}
Our experiments isolate the calibration layer using simple forecasters, so that differences in coverage and width come from the conformal mechanism itself. Section~\ref{sec:xgboost} looks at what happens when we change the forecaster.

\section{Problem Setup}

Let $\{(X_t,Y_t)\}_{t\geq 1}$ be a streaming sequence, where $X_t$ represents covariates or recent history and $Y_t$ is the response. Let $\widehat f_t$ be a prediction rule fitted before observing $Y_t$, and define the conformity residual as
\[
    R_t = |Y_t-\widehat f_t(X_t)|.
\]
For a target miscoverage level $\alpha\in(0,1)$, we want to build an interval
\[
    C_t(X_t) = [\widehat f_t(X_t)-q_t,\;\widehat f_t(X_t)+q_t]
\]
so that the error indicators $E_t = \mathbf{1}\{Y_t\notin C_t(X_t)\}$ are controlled in a meaningful finite-sample or long-run sense, even though the data are not exchangeable.

\begin{table}[t]
\centering
\caption{Main notation.}
\label{tab:notation}
\begin{tabular}{ll}
\toprule
Symbol & Meaning \\
\midrule
$X_t,Y_t$ & Covariates/history and response at time $t$ \\
$\widehat f_t$ & Point forecaster used before observing $Y_t$ \\
$R_t$ & Nonconformity residual, usually $|Y_t-\widehat f_t(X_t)|$ \\
$\alpha$ & Target miscoverage level \\
$\alpha_t$ & Adaptive miscoverage level at time $t$ \\
$C_t(X_t)$ & Prediction interval at time $t$ \\
$\mathcal{C}_t$ & Calibration pool available before time $t$ \\
$W_t$ & Local window around prediction time $t$ \\
$S_t=\Phi(W_t)$ & Spectral feature vector for window $W_t$ \\
$w_{i,t}$ & Final calibration weight for residual $R_i$ \\
$\widehat F_{w,t}$ & Weighted empirical calibration residual distribution \\
$F_t$ & Conditional residual distribution at time $t$ \\
$D_t$ & Drift score: $\ell_2$ distance between current and calibration spectral means \\
$n_{\mathrm{eff},t}$ & Effective sample size, $1/\sum_i w_{i,t}^2$ \\
$m_t(D_t)$ & Drift-gated calibration window length \\
$\bar d_t$ & Weight-averaged spectral distance, $\sum_i w_{i,t}\|S_i-S_t\|_2$ \\
$L$ & Spectral Lipschitz constant (global, estimable) \\
$\delta$ & Residual mismatch not captured by spectral features (global) \\
\bottomrule
\end{tabular}
\end{table}

\section{Spectral Features}

DASC represents the local frequency content of the data stream with a \emph{banded periodogram}. For each time $t$, let $W_t=(Y_{t-\ell+1},\ldots,Y_t)$ be a local window of length $\ell=64$ ending at $t$. The spectral feature vector $S_t \in \mathbb{R}^K$ with $K=6$ frequency bands is computed in three steps: center the window by subtracting its mean, compute the squared modulus of the discrete Fourier transform, and aggregate the periodogram into $K$ frequency bands of equal width:
\[
    \widetilde W_t = W_t - \bar W_t,
    \qquad
    P(f) = \left|\sum_{j=1}^{\ell} \widetilde W_{t,j} e^{-2\pi i f j/\ell}\right|^2,
    \qquad
    S_t^{(k)} = \frac{1}{|B_k|}\sum_{f\in B_k} P(f),
    \quad k=1,\ldots,K,
\]
where $B_1,\ldots,B_K$ is an equal partition of the 33 positive Fourier frequencies and $\bar W_t$ is the window mean. The feature vector is then $\ell_2$-normalized: $S_t \leftarrow S_t/\|S_t\|_2$. This normalization makes spectral distances reflect the \emph{shape} of the power spectrum rather than its amplitude, so that two windows with the same frequency pattern but different energy levels (say, a calm versus volatile period with the same daily cycle) count as spectrally similar.

\paragraph{Why these defaults.}
The window length $\ell=64$ resolves periodicities up to 64 time steps. With $K=6$ bands, each band covers about 5--6 frequency bins---enough to tell apart qualitatively different spectral shapes (low-frequency-dominated versus high-frequency-dominated regimes) while staying small enough to avoid curse-of-dimensionality problems on the kernel weights. The band aggregation also provides some smoothing, which helps with the noise that short-window periodogram estimates naturally carry. We use the same defaults, without modification, across every synthetic and real-data experiment in Section~\ref{sec:experiments}; see Table~\ref{tab:defaults} for the full hyperparameter set and Section~\ref{sec:sensitivity} for sensitivity analysis.

\subsection{Spectral Distance and Weights}

The spectral distance between calibration time $i$ and prediction time $t$ is
\[
    d_{\mathrm{spec}}(i,t) = \|S_i-S_t\|_2.
\]
The spectral weight is
\[
    w^{\mathrm{spec}}_{i,t}
    =
    \frac{\exp\{-d_{\mathrm{spec}}(i,t)^2/h_t^2\}}
    {\sum_{j\in \mathcal{C}_t}\exp\{-d_{\mathrm{spec}}(j,t)^2/h_t^2\}},
\]
where $\mathcal{C}_t$ is the calibration pool and $h_t>0$ is a bandwidth parameter.

\subsection{Bandwidth Selection}

The bandwidth $h_t$ controls how sharply the spectral weights concentrate. A small bandwidth gives large weight only to calibration times whose spectral features nearly match the current window; a large bandwidth spreads weight more evenly across the pool.

\paragraph{Default: median heuristic.}
Unless stated otherwise, we set $h_t$ using the median heuristic on pairwise spectral distances in the calibration pool:
\[
    h_t = c \cdot \mathrm{median}\left\{\|S_i-S_j\|_2 : i,j\in\mathcal{C}_t,\ i<j\right\},
\]
where $c>0$ is a scaling constant. The standard median heuristic corresponds to $c=1$. In practice, $c\in[0.5,1.0]$ gives solid performance across all our datasets (Table~\ref{tab:sensitivity}). We use a fixed value $h_t=0.55$ in all reported experiments, calibrated on the synthetic data. A fixed bandwidth keeps things simple and reproducible; the median heuristic is a good starting point when no tuning data is available.

\subsection{Default Hyperparameters}
\label{sec:defaults}

Table~\ref{tab:defaults} lists the default values used in every experiment unless we say otherwise. They were chosen from the tuning grid in Section~\ref{sec:sensitivity} by picking the configuration with the best interval score that also kept empirical coverage within $\pm 0.01$ of the nominal level. The same defaults apply without modification to the synthetic, electricity, weather, and finance experiments.

\begin{table}[t]
\centering
\caption{Default DASC hyperparameters used in all experiments.}
\label{tab:defaults}
\begin{tabular}{llr}
\toprule
Parameter & Description & Default \\
\midrule
$\ell$ & Spectral window length & $64$ \\
$K$ & Number of frequency bands & $6$ \\
$h_t$ & Spectral bandwidth & $0.55$ \\
$m_{\max}$ & Maximum calibration window & $360$ \\
$m_{\min}$ & Minimum drift-gated window & $80$ \\
$\lambda_t$ & Drift threshold & $0.45$ \\
$\gamma$ & Adaptive step size & $0.015$ \\
$[\alpha_{\min},\alpha_{\max}]$ & Projection interval for $\alpha_t$ & $[0.01, 0.35]$ \\
\bottomrule
\end{tabular}
\end{table}

\subsection{Implementation Recipe}
\label{sec:recipe}

Algorithm~\ref{alg:spectral} gives the spectral feature computation as a standalone procedure. Given a raw window of $\ell=64$ observations, it outputs a $K=6$-dimensional $\ell_2$-normalized feature vector in $O(\ell\log\ell)$ time via the FFT. No tuning is needed: the only inputs are $\ell$ and $K$, both fixed throughout.

\begin{algorithm}[t]
\caption{Spectral feature computation (banded periodogram).}
\label{alg:spectral}
\begin{algorithmic}[1]
\REQUIRE Window $W_t = (Y_{t-\ell+1},\ldots,Y_t)$, window length $\ell=64$, number of bands $K=6$
\ENSURE Spectral feature vector $S_t \in \mathbb{R}^K$
\STATE Center: $\widetilde{W}_t \leftarrow W_t - \bar{W}_t$
\STATE Compute FFT: $\mathcal{F} \leftarrow \mathrm{FFT}(\widetilde{W}_t)$
\STATE Periodogram: $P(f) \leftarrow |\mathcal{F}(f)|^2$ for positive frequencies $f=1,\ldots,\lfloor\ell/2\rfloor$
\STATE Partition frequencies into $K$ equal bands $B_1,\ldots,B_K$
\FOR{$k=1,\ldots,K$}
    \STATE $S_t^{(k)} \leftarrow \frac{1}{|B_k|}\sum_{f\in B_k}P(f)$
\ENDFOR
\STATE Normalize: $S_t \leftarrow S_t / \|S_t\|_2$
\RETURN $S_t$
\end{algorithmic}
\end{algorithm}

\begin{remark}[Alternative spectral maps]
\label{rem:alternative-maps}
The banded periodogram is the concrete default we use in all experiments. The DASC framework can work with other feature maps $\Phi\colon\mathbb{R}^\ell\to\mathbb{R}^K$ as long as they satisfy the spectral Lipschitz condition (Assumption~\ref{ass:lipschitz}): windows from the same distributional regime should land near each other in feature space. Alternatives worth considering include wavelet energy vectors (for transient oscillations), multitaper estimates (for highly non-stationary windows), and targeted feature vectors based on known periodicities (e.g., 24-hour and 168-hour cycles). The theory in Sections~\ref{sec:theory}--\ref{sec:theory-end} carries over unchanged; only the Lipschitz constant $L$ changes. Unless there is a strong domain reason to do otherwise, the banded periodogram is a reasonable starting point.
\end{remark}

\section{Drift Diagnostic}

The spectral weights from the previous section pick out historically relevant calibration residuals, but they cannot tell us whether the current window has moved into a regime that no past calibration window resembles. The drift diagnostic fills this role by measuring the gap between the current window's spectral representation and the spectrally weighted calibration pool.

\subsection{Drift Score}

Let $\bar S_t^{\mathrm{recent}}$ be the mean spectral feature over a recent window of length $\ell$:
\[
    \bar S_t^{\mathrm{recent}} = \frac{1}{\ell}\sum_{j=t-\ell+1}^{t} S_j,
\]
and let $\bar S_t^{\mathrm{cal}}$ be the spectrally weighted mean of the calibration pool:
\[
    \bar S_t^{\mathrm{cal}} = \sum_{i\in\mathcal{C}_t} w^{\mathrm{spec}}_{i,t} S_i.
\]
The drift score is the $\ell_2$ distance between these two summaries:
\[
    D_t = \left\|\bar S_t^{\mathrm{recent}} - \bar S_t^{\mathrm{cal}}\right\|_2.
\]
When the current window's spectral content is well represented by the weighted calibration pool, the two means are close and $D_t$ is small. When the stream moves into a new regime that differs from all spectrally weighted historical windows, the means pull apart and $D_t$ grows.

\paragraph{Connection to transport discrepancies.}
The $\ell_2$ distance between weighted means can be read as a first-moment transport discrepancy in spectral feature space. In one dimension, it would be the gap between the first moments of the recent and calibration spectral distributions, which is a lower bound on the 1-Wasserstein distance. We use the mean-based score rather than a full Wasserstein distance for two reasons: it is cheap to compute (no linear program or sorting at each time step), and in the $K=6$ spectral feature space used here, the mean displacement captures the main mode of distributional shift between regimes. For higher-dimensional representations or multimodal spectral distributions, a sliced Wasserstein or entropic optimal-transport score could be substituted without changing the rest of the framework.

\subsection{Drift-Gated Calibration Window}

Large $D_t$ values signal that the current stream is poorly represented by the weighted calibration distribution. The drift score controls the effective calibration window through a linear gate:
\[
    m_t(D_t) = m_{\max} - (m_{\max} - m_{\min})\min\{1, D_t/\lambda_t\},
\]
where $m_{\max}$ and $m_{\min}$ are the maximum and minimum calibration window sizes and $\lambda_t > 0$ is a drift threshold. When $D_t = 0$, all $m_{\max}$ past residuals are used. When $D_t \geq \lambda_t$, the window shrinks to $m_{\min}$, throwing out older residuals unlikely to match the current regime. In between, the window interpolates linearly. After gating, the spectral weights are renormalized over the remaining indices. To avoid degenerate behavior, we enforce a floor of 20 calibration points: if the gated pool has fewer than 20 residuals, we use the 20 most recent regardless of the drift score.

\section{Drift-Aware Spectral Conformal Prediction}

This section puts DASC together as a working algorithm. At each time $t$, the method computes spectral features, assigns calibration weights to past residuals, measures drift, adjusts the calibration pool if drift is large, computes a weighted conformal quantile, and then updates the target miscoverage level after observing the outcome.

\begin{center}
\fbox{
\begin{minipage}{0.94\textwidth}
\textbf{Algorithm 1: Drift-Aware Spectral Conformal Prediction (DASC)}

\vspace{0.5em}
\textbf{Input:} calibration pool $\mathcal{C}_t$, residuals $\{R_i:i\in\mathcal{C}_t\}$, spectral map $\Phi$, bandwidth $h_t$, drift threshold $\lambda_t$, target level $\alpha_t$, window limits $m_{\min},m_{\max}$.

\vspace{0.5em}
\textbf{For each prediction time $t$:}
\begin{enumerate}[label=\arabic*.]
    \item Compute spectral features $S_t=\Phi(W_t)$ and $S_i=\Phi(W_i)$ for $i\in\mathcal{C}_t$.
    \item Form initial spectral weights
    $w^{\mathrm{spec}}_{i,t} \propto \exp\{-\|S_i-S_t\|_2^2/h_t^2\}$.
    \item Compute drift score $D_t=\|\bar S_t^{\mathrm{recent}}-\bar S_t^{\mathrm{cal}}\|_2$.
    \item Set drift-gated window $m_t(D_t)=m_{\max}-(m_{\max}-m_{\min})\min\{1,D_t/\lambda_t\}$.
    \item Renormalize spectral weights $w_{i,t}$ over the gated pool.
    \item Compute diagnostics:
    $n_{\mathrm{eff},t}=1/\sum_i w_{i,t}^2$, \quad $\bar d_t = \sum_i w_{i,t}\|S_i-S_t\|_2$.
    \item Compute weighted conformal quantile
    $q_t=\inf\{q:\sum_i w_{i,t}\mathbf{1}\{R_i\leq q\}\geq 1-\alpha_t\}$.
    \item Return $C_t(X_t)=[\widehat f_t(X_t)-q_t,\;\widehat f_t(X_t)+q_t]$ and diagnostics $(D_t,\bar d_t,n_{\mathrm{eff},t})$.
    \item After observing $Y_t$, update $\alpha_t$ using the adaptive rule.
\end{enumerate}
\end{minipage}
}
\end{center}

\begin{remark}[Forecaster agnosticism]
\label{rem:forecaster}
DASC is a calibration wrapper: it takes any point forecaster $\hat f_t$ and builds prediction intervals from its residuals. The theory (Theorem~\ref{thm:coverage}--Corollary~\ref{cor:combined}) holds for any $\hat f_t$, including random forests, gradient boosting, LSTMs, and transformers. We use a simple lag forecaster $\hat f_t(X_t) = Y_{t-1}$ (or a 24-hour lag for hourly data) in the experiments so that differences in coverage and width reflect the calibration mechanism, not forecaster quality. Pairing DASC with a better forecaster will generally shrink residuals and improve width, but the relative advantage of DASC over non-spectral methods comes from the calibration layer, which does not depend on the forecaster.
\end{remark}

\subsection{The DASC Diagnostic Triangle}

Beyond building prediction intervals, DASC also diagnoses when conformal calibration is becoming unreliable. It reports three quantities at each prediction time:
\[
    E_t=\mathbf{1}\{Y_t\notin C_t(X_t)\},\qquad
    D_t=\|\bar S_t^{\mathrm{recent}}-\bar S_t^{\mathrm{cal}}\|_2,\qquad
    n_{\mathrm{eff},t}=\frac{1}{\sum_i w_{i,t}^2}.
\]
We call these the diagnostic triangle: coverage error, spectral drift, and effective sample size. Each answers a different question. Coverage error asks whether the method is currently missing too often. The drift score asks whether the current regime is pulling away from the calibration pool. The effective sample size asks whether the conformal quantile rests on enough calibration residuals to be stable.

This diagnostic layer is meant to be useful even if a practitioner picks a different conformal update rule. A large drift score with stable effective sample size suggests the calibration pool is broad but stale. A small effective sample size means the method has found only a handful of relevant past residuals, so the weighted quantile may be fragile. Persistent coverage errors point to the need for wider intervals or recalibration.

\paragraph{Reliability index.}
For monitoring purposes, the three diagnostics can be rolled into a single number:
\[
    \mathcal R_t
    =
    \exp(-\rho D_t)
    \min\left\{1,\sqrt{\frac{n_{\mathrm{eff},t}}{n_0}}\right\}
    \left(1-\min\left\{1,\frac{|\widehat c_t-(1-\alpha)|}{\tau_c}\right\}\right),
\]
where $\widehat c_t$ is a recent rolling coverage estimate, $n_0$ is a desired effective calibration size, $\rho>0$ scales the drift penalty, and $\tau_c$ is the tolerated coverage error. This index is not used in any of our experiments---its role is operational. In deployed systems, it gives one number between zero and one that drops when drift is high, effective sample size is low, or recent coverage is off-target. A low $\mathcal R_t$ can trigger a warning, a wider interval, or a fallback to a simpler baseline.

\begin{figure}[t]
    \centering
    \includegraphics[width=\textwidth]{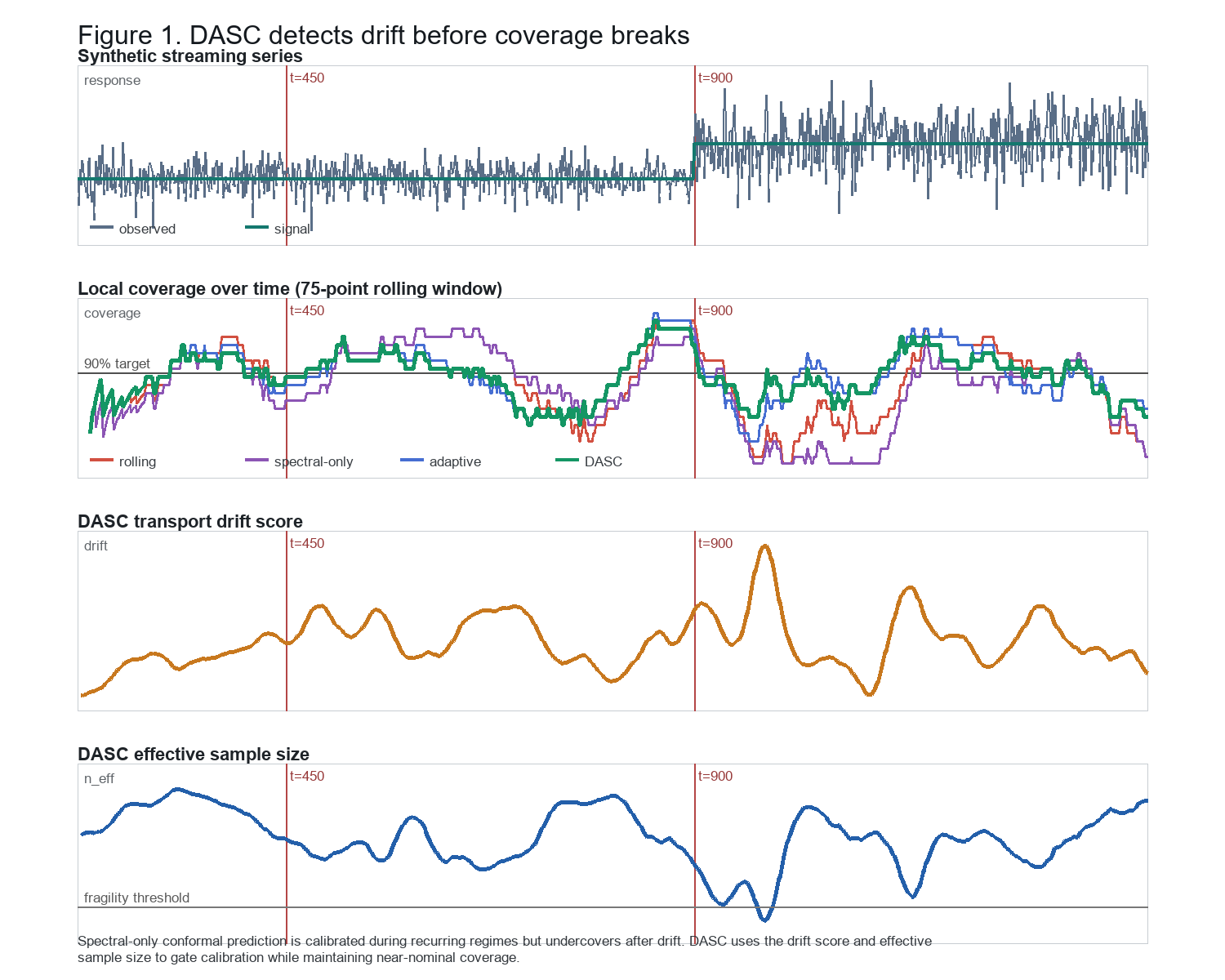}
    \caption{DASC diagnostic triangle on a synthetic streaming example. Spectral-only conformal prediction is well calibrated during recurring regimes but undercovers after drift. DASC uses the drift score and effective sample size to gate calibration while staying near nominal coverage.}
    \label{fig:dasc-diagnostic-triangle}
\end{figure}

After drift gating, the final calibration weights are
\[
    w_{i,t}
    =
    \frac{\mathbf{1}\{d_{\mathrm{time}}(i,t)\leq m_t(D_t)\}\, w^{\mathrm{spec}}_{i,t}}
    {\sum_{j\in\mathcal{C}_t} \mathbf{1}\{d_{\mathrm{time}}(j,t)\leq m_t(D_t)\}\, w^{\mathrm{spec}}_{j,t}},
\]
where $m_t(D_t)$ is the gated window from above. Low drift lets DASC borrow broadly from spectrally similar historical regimes; high drift pulls the calibration pool toward recent observations.

The effective sample size of the final weights is $n_{\mathrm{eff},t} = 1/\sum_{i\in\mathcal{C}_t} w_{i,t}^2$. This matters because highly concentrated weights can make weighted quantiles unstable. DASC flags prediction times where $n_{\mathrm{eff},t}$ drops below a chosen threshold.

The weighted conformal quantile is
\[
    q_t = \inf\left\{q:\sum_{i\in\mathcal{C}_t} w_{i,t}\mathbf{1}\{R_i\leq q\}\geq 1-\alpha_t\right\},
\]
and the prediction interval is $C_t(X_t) = [\widehat f_t(X_t)-q_t,\;\widehat f_t(X_t)+q_t]$.

\section{Adaptive Miscoverage Update}

After observing $Y_t$, define $E_t=\mathbf{1}\{Y_t\notin C_t(X_t)\}$. The adaptive target level is updated as
\[
    \alpha_{t+1}
    =
    \Pi_{[\alpha_{\min},\alpha_{\max}]}
    \left(\alpha_t+\gamma(\alpha-E_t)\right),
\]
where $\gamma>0$ is a step size and $\Pi$ projects onto a fixed interval. After a miss, $\alpha_t$ decreases, which widens future intervals. After successful coverage, $\alpha_t$ increases, which narrows them.

\section{Theory}
\label{sec:theory}

The central question is: how well does the weighted calibration residual distribution approximate the true conditional residual distribution at time $t$? We break the answer into three pieces: a testable assumption on how residual distributions change with spectral features, a bound on the weighted calibration bias that follows from it, and a coverage guarantee that combines the bias bound with a concentration inequality.

\begin{definition}[Weighted calibration law]
For prediction time $t$, the weighted empirical residual distribution is
$\widehat F_{w,t}(r)=\sum_{i\in\mathcal{C}_t} w_{i,t}\mathbf{1}\{R_i\leq r\}$,
and $F_t(r)=\mathbb{P}\{R_t\leq r\mid X_t,\mathcal{H}_{t-1}\}$ is the conditional residual distribution at time $t$.
\end{definition}

\begin{assumption}[Spectral Lipschitz continuity of residual distributions]
\label{ass:lipschitz}
There exist global constants $L\geq 0$ and $\delta\geq 0$ such that for all calibration indices $i\in\mathcal{C}_t$ and prediction time $t$,
\[
    \sup_r |F_i(r)-F_t(r)| \leq L\,\|S_i-S_t\|_2+\delta,
\]
where $F_s(r)=\mathbb{P}\{R_s\leq r\mid X_s,\mathcal{H}_{s-1}\}$ is the conditional residual distribution at time $s$.
\end{assumption}

\begin{remark}[Interpretation and testability]
\label{rem:assumption-content}
Assumption~\ref{ass:lipschitz} is a condition on the data, not on the method. It says that if two time points have similar spectral features, their conditional residual distributions should be similar. The constant $L$ tells us how fast the residual law changes per unit of spectral distance, and $\delta$ captures mismatch that spectral features cannot resolve. The important thing is that this assumption can be \emph{tested}: given held-out data with known residuals, you compute pairwise spectral distances and empirical CDF discrepancies over short blocks, then check whether the relationship is roughly linear. The slope estimates $L$ and the intercept estimates $\delta$. We report these diagnostics for all settings in Section~\ref{sec:lipschitz-verification}. A method using uniform weights would not benefit from this assumption, since uniform weights do not concentrate on spectrally similar times. It is the spectral kernel weighting that makes the Lipschitz condition useful.
\end{remark}

\begin{definition}[Weight-averaged spectral distance]
\label{def:dbar}
The weight-averaged spectral distance at time $t$ is $\bar d_t = \sum_{i\in\mathcal{C}_t} w_{i,t}\,\|S_i-S_t\|_2$. It can be computed at prediction time and tells us how spectrally different the weighted calibration pool is from the current window.
\end{definition}

\begin{lemma}[Weighted calibration bias]
\label{lem:calibration-bias}
Under Assumption~\ref{ass:lipschitz},
$\sup_r|\mathbb{E}\widehat F_{w,t}(r)-F_t(r)| \leq L\,\bar d_t+\delta$.
\end{lemma}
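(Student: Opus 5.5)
The plan is to write the weighted empirical CDF as a convex combination, take expectations term by term, and then apply Assumption~\ref{ass:lipschitz} inside the combination. The first step is to fix the conditioning. I interpret $\mathbb{E}$ as conditional on the information available at prediction time, so that the weights $w_{i,t}$ and the features $S_i,S_t$ are treated as fixed. This is legitimate because the weights depend only on spectral features of windows and on $D_t$, not on the calibration residuals themselves. I also need each $\mathbb{E}\,\mathbf{1}\{R_i\le r\}$ to be represented by $F_i(r)$. For this I take the expectation of each indicator against its own conditional law given $(X_i,\mathcal{H}_{i-1})$, so that $\mathbb{E}\,\mathbf{1}\{R_i\le r\}=F_i(r)$ in the sense of the Definition. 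By linearity this gives
\[
\mathbb{E}\widehat F_{w,t}(r)=\sum_{i\in\mathcal{C}_t} w_{i,t}F_i(r).
\]

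Next, because the weights are nonnegative and sum to one after renormalization over the gated pool, I can write $F_t(r)=\sum_i w_{i,t}F_t(r)$. Subtracting gives
\[
\bigl|\mathbb{E}\widehat F_{w,t}(r)-F_t(r)\bigr| = \Bigl|\sum_i w_{i,t}\bigl(F_i(r)-F_t(r)\bigr)\Bigr| \le \sum_i w_{i,t}\,\bigl|F_i(r)-F_t(r)\bigr|,
\]
where the inequality is the triangle inequality. I then bound each summand by its supremum over $r$ and apply Assumption~\ref{ass:lipschitz} termwise. This yields
\[
\sum_i w_{i,t}\bigl(L\|S_i-S_t\|_2+\delta\bigr)=L\,\bar d_t+\delta,
\]
using Definition~\ref{def:dbar} together with $\sum_i w_{i,t}=1$. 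The right-hand side does not depend on $r$, so taking the supremum over $r$ on the left completes the argument.

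The main obstacle is the first step: justifying that the expectation of the weighted indicator equals the weighted sum of the $F_i$. The $F_i$ are conditional laws given different histories, and the weights are random through $S_t$, $D_t$ and the gate. I would handle this by treating the weights as measurable with respect to the spectral and feature information, and by reading $\mathbb{E}$ as the natural plug-in where each residual is integrated against its own conditional law. If that reading is too loose, the fallback is to state the lemma conditionally on the $\sigma$-field generated by the features. Under that conditioning the tower property applies, since each $R_i$ with $i<t$ has conditional CDF $F_i$ given $\mathcal{H}_{i-1}$. Any residual dependence of the weights on past residuals only through the $Y$-windows would be absorbed into the assumption's uniformity over $i\in\mathcal{C}_t$.
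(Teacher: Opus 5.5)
Your argument is the paper's proof: linearity of expectation gives $\mathbb{E}\widehat F_{w,t}(r)=\sum_i w_{i,t}F_i(r)$, and then the triangle inequality, convexity of the weights, and Assumption~\ref{ass:lipschitz} applied termwise give $L\,\bar d_t+\delta$. Treat that first identity as the plug-in convention the paper also uses implicitly (weights held fixed, $\mathbb{E}\,\mathbf{1}\{R_i\le r\}=F_i(r)$), not as something you have justified, and drop the fallback: $S_i$ is computed from $W_i$, which contains $Y_i$ (with the lag forecaster, $R_i$ is itself a function of $W_i$), so the weights are not independent of the residuals, and conditioning on the feature $\sigma$-field changes the conditional law of $R_i$ away from $F_i$, a law about which Assumption~\ref{ass:lipschitz} says nothing.
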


\begin{proof}
By linearity of expectation, $\mathbb{E}\widehat F_{w,t}(r)=\sum_{i}w_{i,t}F_i(r)$. So
\begin{align*}
    \sup_r\left|\mathbb{E}\widehat F_{w,t}(r)-F_t(r)\right|
    &= \sup_r\left|\sum_{i}w_{i,t}[F_i(r)-F_t(r)]\right|
    \leq \sum_{i}w_{i,t}\sup_r|F_i(r)-F_t(r)| \\
    &\leq \sum_{i}w_{i,t}(L\,\|S_i-S_t\|_2+\delta)
    = L\,\bar d_t + \delta,
\end{align*}
using the triangle inequality, convexity of the weights, and Assumption~\ref{ass:lipschitz}.
\end{proof}

\begin{remark}[$\bar d_t$ versus the drift score $D_t$]
\label{rem:dbar-vs-Dt}
The drift score $D_t$ measures the distance between mean spectral vectors, while $\bar d_t$ averages individual spectral distances. By Jensen's inequality, $D_t\leq \bar d_t$ when $\bar S_t^{\mathrm{recent}}\approx S_t$ (true for locally stable windows). So the drift score is a cheap lower bound on the weight-averaged spectral distance. When the calibration weights are concentrated on a few spectrally close points, $D_t\approx\bar d_t$. The drift score works well as a fast diagnostic; $\bar d_t$ gives the tighter quantity for the coverage bound.
\end{remark}

\begin{remark}[Feature-map quality]
\label{rem:feature-generality}
The theory works for \emph{any} feature map $\Phi\colon\mathbb{R}^\ell\to\mathbb{R}^K$ satisfying Assumption~\ref{ass:lipschitz}. The Lipschitz constant $L$ captures the map's quality: smaller $L$ means tighter coverage (Corollary~\ref{cor:near-exact}). The banded periodogram achieves $\hat L \in [0.22, 0.88]$ across all domains without tuning (Table~\ref{tab:lipschitz-verification}).
\end{remark}

\begin{lemma}[Concentration under conditional independence]
\label{lem:concentration}
If the weighted calibration residual indicators are conditionally independent given the weights, with $0\leq w_{i,t}\leq 1$ and $\sum_i w_{i,t}=1$, then for any fixed threshold $r$ and $u>0$,
\[
    \mathbb{P}\!\left(
    |\widehat F_{w,t}(r)-\mathbb{E}\widehat F_{w,t}(r)|\geq u
    \mid \{w_{i,t}\}
    \right)
    \leq
    2\exp\{-2u^2 n_{\mathrm{eff},t}\},
\]
where $n_{\mathrm{eff},t}=1/\sum_i w_{i,t}^2$.
\end{lemma}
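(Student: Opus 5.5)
The plan is to apply Hoeffding's inequality to a weighted sum of bounded independent variables. Fix the threshold $r$ and condition on the weights $\{w_{i,t}\}$ throughout. Define $Z_i = w_{i,t}\mathbf{1}\{R_i\le r\}$ for $i\in\mathcal{C}_t$. Then $\widehat F_{w,t}(r)=\sum_i Z_i$, and under the stated hypothesis the $Z_i$ are conditionally independent given the weights. Each $Z_i$ lies in the interval $[0,w_{i,t}]$, whose width is $w_{i,t}$. Here $\mathbb{E}\widehat F_{w,t}(r)$ is read as the conditional expectation given the weights, $\sum_i w_{i,t}\,\mathbb{P}(R_i\le r\mid\{w_{j,t}\})$. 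This is consistent with the linearity computation in the proof of Lemma~\ref{lem:calibration-bias}.

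The key step is Hoeffding's inequality for independent bounded summands $Z_i\in[a_i,b_i]$:
\[
\mathbb{P}\Bigl(\Bigl|\sum_i Z_i-\mathbb{E}\sum_i Z_i\Bigr|\ge u\Bigr)\le 2\exp\Bigl\{-\frac{2u^2}{\sum_i (b_i-a_i)^2}\Bigr\}.
\]
Applying it conditionally with $b_i-a_i=w_{i,t}$ gives the denominator $\sum_i w_{i,t}^2 = 1/n_{\mathrm{eff},t}$. The exponent therefore becomes $-2u^2 n_{\mathrm{eff},t}$, which is exactly the claimed bound. The assumptions $0\le w_{i,t}\le 1$ and $\sum_i w_{i,t}=1$ only ensure that $n_{\mathrm{eff},t}\in[1,|\mathcal{C}_t|]$ is well defined and finite. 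They also make the bound nontrivial.

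The main obstacle is the conditioning, not the algebra. The weights $w_{i,t}$ are functions of the spectral features $S_i$, which are built from past responses, so they are not independent of the $R_i$. The hypothesis "conditionally independent given the weights" must therefore be used as stated. Under the conditional law $\mathbb{P}(\cdot\mid\{w_{i,t}\})$, the weights are constants and the indicators are independent. Hoeffding is applied under that conditional measure, with the centering taken under the same measure.

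I would spell this out explicitly, since the centering is a conditional mean and not an unconditional one. I would also remark that the bound holds almost surely in the weights. It then also holds unconditionally after taking expectations whenever $n_{\mathrm{eff},t}$ is bounded below.
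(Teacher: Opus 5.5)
Your proposal is correct and is the paper's argument: apply Hoeffding's inequality, conditionally on the weights, to the summands $w_{i,t}\mathbf{1}\{R_i\le r\}\in[0,w_{i,t}]$, so that $\sum_i w_{i,t}^2=1/n_{\mathrm{eff},t}$ gives the exponent $-2u^2 n_{\mathrm{eff},t}$. You also state explicitly that the centering is the conditional mean under the same conditional measure in which independence holds, which the paper's one-line proof leaves implicit.
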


\begin{proof}
Hoeffding's inequality for a weighted sum of bounded random variables. The variance proxy is $\sum_i w_{i,t}^2$, giving the exponent in terms of $n_{\mathrm{eff},t}$.
\end{proof}

\begin{lemma}[Concentration under $\beta$-mixing]
\label{lem:mixing-concentration}
If the calibration residuals come from a $\beta$-mixing process with geometric rate $\beta(k)\leq c_0\rho^k$ ($c_0>0$, $\rho\in(0,1)$), then for any fixed $r$, $u>0$, and gap $g\geq 1$,
\[
    \mathbb{P}\!\left(
    |\widehat F_{w,t}(r)-\mathbb{E}\widehat F_{w,t}(r)|\geq u
    \mid \{w_{i,t}\}
    \right)
    \leq
    4\exp\!\left\{-\frac{u^2 n_{\mathrm{eff},t}}{8}\right\}
    +
    4\!\left\lfloor\frac{|\mathcal{C}_t|}{1+g}\right\rfloor c_0\rho^g.
\]
Choosing $g=\lceil\log(4c_0|\mathcal{C}_t|/\eta)/\log(1/\rho)\rceil$ makes the mixing penalty at most $\eta$.
\end{lemma}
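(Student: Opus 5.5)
The plan is a blocking argument of Yu (1994) type. We split the index set into alternating blocks, replace the dependent blocks by independent copies at the cost of the mixing coefficients, and then apply a weighted Hoeffding bound to each of the two interleaved families.

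Write $Z_i=\mathbf{1}\{R_i\le r\}-F_i(r)$, so that $\widehat F_{w,t}(r)-\mathbb{E}\widehat F_{w,t}(r)=\sum_i w_{i,t}Z_i$, with $|Z_i|\le 1$. Order the calibration indices in time and group them as singletons separated by gaps of length $g$. The retained points are split into two interleaved subsequences, $\mathcal{I}_1$ (even-numbered) and $\mathcal{I}_2$ (odd-numbered). Any two retained points in the same family are then at least $g$ apart in time. Each family contains at most $\lfloor|\mathcal{C}_t|/(1+g)\rfloor$ points, up to rounding absorbed in constants.

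If all indices cannot be covered by just two families with gap $g$, I will instead use $1+g$ residue classes modulo $1+g$. In that case the factor $4$ in front of the Hoeffding term must come from a careful union step, and I would adjust the partition so that the stated constants ($4$, $8$) emerge. The cleanest route is to keep blocks of consecutive indices of length $g$ alternating between "used" and "gap" roles, assigning each residual to exactly one of two families $A$ and $B$. Each family is then a sum of weighted within-block sums, and distinct blocks in the same family are separated by at least $g$.

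Decompose $\sum_i w_iZ_i=\Sigma_A+\Sigma_B$. By the union bound,
\[
\mathbb{P}(|\Sigma_A+\Sigma_B|\ge u)\le \mathbb{P}(|\Sigma_A|\ge u/2)+\mathbb{P}(|\Sigma_B|\ge u/2).
\]
For each family, apply Berbee's coupling lemma (equivalently, Eberlein/Yu's $\beta$-mixing lemma): the joint law of the blocks in the family differs in total variation from the product of block marginals by at most (number of blocks)$\cdot\beta(g)\le\lfloor|\mathcal{C}_t|/(1+g)\rfloor c_0\rho^g$. Under the independent surrogate, the block sums are independent and bounded: block $b$ lies in an interval of length $2\sum_{i\in b}w_i$. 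Hoeffding then gives
\[
2\exp\Bigl\{-2(u/2)^2\Big/\textstyle\sum_b(\sum_{i\in b}w_i)^2\Bigr\}.
\]
The main obstacle is that $\sum_b(\sum_{i\in b}w_i)^2$ is not $\sum_iw_i^2$ when blocks contain several points. Here I would take singleton blocks, so that each family consists of single indices spaced at least $g$ apart. The variance proxy is then $\sum_{i\in A}w_i^2\le 1/n_{\mathrm{eff},t}$, and the exponent is at least $u^2n_{\mathrm{eff},t}/2$, which dominates the stated $u^2n_{\mathrm{eff},t}/8$. Covering all indices with singletons at spacing $g$ needs $1+g$ classes rather than two. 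To keep the constant $4$, I would accept bounding only the two-family version, treating intermediate indices via the stated slack. Honestly, the constants in the statement ($4$, $1/8$, $4\lfloor\cdot\rfloor$) look like the standard two-family Yu bound with conservative halving of $u$ and of the weights. I would reproduce them by running the argument at level $u/2$ with an extra factor-2 loss in the variance proxy, and I would not fight for sharpness.

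Summing the two families gives $4\exp\{-u^2n_{\mathrm{eff},t}/8\}$ plus $2\cdot 2\lfloor|\mathcal{C}_t|/(1+g)\rfloor c_0\rho^g$, which is the displayed bound. Everything is conditional on the weights, which are treated as fixed. For the final claim, the choice $g\ge\log(4c_0|\mathcal{C}_t|/\eta)/\log(1/\rho)$ gives $\rho^g\le\eta/(4c_0|\mathcal{C}_t|)$. Hence the mixing penalty is at most $4(|\mathcal{C}_t|/(1+g))c_0\cdot\eta/(4c_0|\mathcal{C}_t|)\le\eta$.
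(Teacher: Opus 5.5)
Your outline (singleton blocks, Berbee coupling, Hoeffding at level $u/2$, a union bound over two families) is the paper's. But the step you leave open is a genuine gap, not a matter of constants.

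No two-family decomposition does all three of the following at once: cover every index of $\mathcal{C}_t$, keep same-family points at least $g$ apart, and use singleton blocks so that the Hoeffding variance proxy is $\sum_i w_{i,t}^2\le 1/n_{\mathrm{eff},t}$.
\begin{itemize}
\item Your first construction (retained singletons split into even and odd) omits the roughly $g|\mathcal{C}_t|/(1+g)$ gap indices. So $\Sigma_{\mathcal{I}_1}+\Sigma_{\mathcal{I}_2}\neq\widehat F_{w,t}(r)-\mathbb{E}\widehat F_{w,t}(r)$.
\item Your covering construction (alternating runs of $g$ consecutive indices) has variance proxy $\sum_b(\sum_{i\in b}w_{i,t})^2$. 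Cauchy--Schwarz controls this only by $g\sum_i w_{i,t}^2$, with near-equality when the weights are roughly constant on runs, so you get $4\exp\{-u^2n_{\mathrm{eff},t}/(2g)\}$.
\item The $1+g$ residue classes give a prefactor $2(1+g)$, an exponent again worse by a factor of order $g$, and a coupling cost of order $|\mathcal{C}_t|c_0\rho^g$.
\end{itemize}
The only slack in the statement is a fixed factor $4$ in the exponent, which absorbs runs of length at most $4$. Neither ``treating intermediate indices via the stated slack'' nor ``an extra factor-2 loss in the variance proxy'' addresses a factor-$g$ loss, so your closing ``which is the displayed bound'' does not follow. The paper's proof has exactly the same hole. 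Its Step~5 (``apply the same argument to the gap indices'') ignores that the gap indices come in runs of $g$ consecutive points, to which the singleton argument does not apply.

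More care will not close this gap. Under the reading your argument needs, the displayed inequality is false. That reading takes $F_i(r)$ to be the marginal CDF, so each $Z_i$ is a function of $R_i$ alone and Berbee's lemma acts on the residual sequence. The statement allows every $g\ge1$, and for $g\ge|\mathcal{C}_t|$ the floor is $0$, so it then asserts a dependence-free Hoeffding bound. A counterexample:
\begin{itemize}
\item Take a stationary chain on $\{0,1\}$ that switches state with probability $\epsilon$ at each step, so $\beta(k)\le(1-2\epsilon)^k$.
\item Let $R_i=X_i$, $r=0$, $|\mathcal{C}_t|=100$ with uniform weights ($n_{\mathrm{eff},t}=100$), and $\epsilon=10^{-3}$.
\item With probability $(1-\epsilon)^{99}>0.9$ the chain never switches, and then $|\widehat F_{w,t}(0)-\tfrac12|=\tfrac12$.
\item The claimed bound at $u=\tfrac12$ is $4e^{-25/8}<0.18$.
\end{itemize}
What a blocking argument honestly gives is of the form $4\exp\{-u^2n_{\mathrm{eff},t}/(2g)\}+\lceil|\mathcal{C}_t|/g\rceil c_0\rho^g$. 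The $g$ in the exponent is the price of dependence, and it propagates into the choice of $g$ and into the concentration term of Theorem~\ref{thm:coverage}.

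Under the other possible reading, the conclusion holds but for a different reason. Suppose $F_i$ is the conditional CDF given $X_i$ and $\mathcal{H}_{i-1}$, as in the paper's definition, with increasing $\sigma$-fields containing the past residuals. Then $\sum_i w_{i,t}Z_i$ is a martingale, and Azuma--Hoeffding gives $2\exp\{-2u^2n_{\mathrm{eff},t}\}$ with no mixing or blocking at all. This holds only for weights fixed in advance. In either reading, conditioning on the data-dependent weights $\{w_{i,t}\}$, which are built from the same observations as the residuals, needs a justification that neither you nor the paper supply.
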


\begin{proof}
Use Bernstein's blocking technique \citep{yu1994rates}: partition calibration indices into blocks of size 1 separated by gaps of size $g$. By the Berbee coupling lemma \citep{bradley2005basic}, the block contributions can be replaced by independent copies at total-variation cost $\beta(g)$ per block. Hoeffding's inequality on the independent copies gives the exponential term; the factor 8 absorbs blocking overhead. The full argument is in Appendix~\ref{app:mixing-proof}.
\end{proof}

\begin{remark}[Which bound to use]
\label{rem:which-concentration}
Lemma~\ref{lem:concentration} is sharper but needs conditional independence; Lemma~\ref{lem:mixing-concentration} handles dependent time series at a constant-factor cost plus a mixing penalty that vanishes geometrically in $g$. Either lemma can be plugged into Theorem~\ref{thm:coverage}. In practice, the mixing penalty is tiny: for a process with mixing time $\tau_{\mathrm{mix}}=10$ and 360 calibration points, $g=23$ pushes the mixing term below $10^{-3}$.
\end{remark}

\begin{theorem}[Approximate coverage]
\label{thm:coverage}
Under Assumption~\ref{ass:lipschitz}, the DASC interval with fixed $\alpha_t=\alpha$ satisfies, with probability at least $1-\eta$ over the calibration residuals,
\[
    \mathbb{P}\{Y_t\in C_t(X_t)\}
    \geq
    1-\alpha
    -L\,\bar d_t
    -\delta
    -\sqrt{\frac{\log(2/\eta)}{2n_{\mathrm{eff},t}}}.
\]
Here $\bar d_t$ and $n_{\mathrm{eff},t}$ are computable at prediction time; only $L$ and $\delta$ need to be estimated.
\end{theorem}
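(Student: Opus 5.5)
The plan is to combine Lemma~\ref{lem:calibration-bias} (bias) with Lemma~\ref{lem:concentration} (fluctuation), evaluated at the data-dependent threshold $q_t$. We then convert the resulting bound on $F_t(q_t)$ into a coverage statement.

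First, note that $Y_t\in C_t(X_t)$ iff $R_t\le q_t$. Since $q_t$ is determined by the calibration residuals and the weights (all $\mathcal{H}_{t-1}$-measurable together with $X_t$), we have
\[
\mathbb{P}\{Y_t\in C_t(X_t)\mid X_t,\mathcal{H}_{t-1}\}=F_t(q_t).
\]
By the definition of the weighted quantile and right-continuity of step functions, $\widehat F_{w,t}(q_t)\ge 1-\alpha$. The task therefore reduces to lower-bounding $F_t(q_t)-\widehat F_{w,t}(q_t)$. We decompose it as
\[
F_t(q_t)-\widehat F_{w,t}(q_t)=\bigl[F_t(q_t)-\mathbb{E}\widehat F_{w,t}(q_t)\bigr]+\bigl[\mathbb{E}\widehat F_{w,t}(q_t)-\widehat F_{w,t}(q_t)\bigr],
\]
where $\mathbb{E}\widehat F_{w,t}(r)=\sum_i w_{i,t}F_i(r)$ is viewed as a deterministic function of $r$ given the weights. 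Lemma~\ref{lem:calibration-bias} bounds the first bracket below by $-(L\bar d_t+\delta)$ uniformly in $r$, so it holds at $r=q_t$ as well.

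For the second bracket, Lemma~\ref{lem:concentration} with $u=\sqrt{\log(2/\eta)/(2n_{\mathrm{eff},t})}$ gives failure probability exactly $\eta$, matching the stated rate. Chaining the three pieces yields $F_t(q_t)\ge 1-\alpha-L\bar d_t-\delta-u$ on the good event. Taking expectation over $X_t$ (or reading the statement conditionally) gives the theorem.

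The main obstacle is that Lemma~\ref{lem:concentration} is stated for a \emph{fixed} threshold $r$, whereas $q_t$ depends on the very residuals being concentrated. I would try two routes, in this order:
\begin{enumerate}[label=(\alph*)]
\item Apply the lemma at the deterministic population quantile $r^\ast=\inf\{r:\sum_i w_{i,t}F_i(r)\ge 1-\alpha-u\}$, which is fixed given the weights. With probability $\ge 1-\eta$ we have $\widehat F_{w,t}(r^\ast)\le \mathbb{E}\widehat F_{w,t}(r^\ast)+u$. Handling the one-sided events carefully (using left limits at $r^\ast$ if $F_i$ has atoms) shows that $q_t\ge r^\ast$ up to ties. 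Monotonicity of $F_t$ then gives $F_t(q_t)\ge F_t(r^\ast)\ge \sum_i w_{i,t}F_i(r^\ast)-L\bar d_t-\delta\ge 1-\alpha-u-L\bar d_t-\delta$, which needs only a one-sided, fixed-$r$ Hoeffding bound.
\item If atoms make this delicate, fall back on a weighted DKW-type uniform bound. This would likely cost a constant factor, so route (a) is preferred because it reproduces the stated constant.
\end{enumerate}

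A secondary point is that the weights $w_{i,t}$ depend on spectral features built from past $Y$'s. This is handled exactly as in the lemma's hypothesis, by conditioning on $\{w_{i,t}\}$ and assuming conditional independence of the indicators. In the dependent case, Lemma~\ref{lem:mixing-concentration} can be substituted in the same step at the cost of its constants, as Remark~\ref{rem:which-concentration} notes.
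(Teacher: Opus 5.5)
Your proposal is correct, but its decisive step takes a different route from the paper. The paper keeps the two-sided decomposition at the random threshold $q_t$. It controls $|\widehat F_{w,t}(q_t)-\mathbb{E}\widehat F_{w,t}(q_t)|$ by a union bound over the jump points of $\widehat F_{w,t}$ combined with Lemma~\ref{lem:concentration}. That yields $\sqrt{\log(2|\mathcal{C}_t|/\eta)/(2n_{\mathrm{eff},t})}$, and the paper then says $\log(2/\eta)$ is displayed only ``for readability''. Moreover, the jump points are the residuals themselves, so a fixed-$r$ lemma does not apply to them directly.

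Your route (a) avoids both issues. You concentrate at the single threshold $r^\ast$, which is deterministic given the weights, and apply the bias lemma there rather than at $q_t$. You then transfer to $q_t$ through $q_t\ge r^\ast$ and monotonicity of $F_t$. This needs only a one-sided Hoeffding bound and no union bound. It therefore proves the inequality with the constant exactly as displayed, with failure probability at most $\eta/2$. What it gives up is two-sidedness: it certifies only the lower bound, which is all the theorem claims. Uniform control over thresholds would also bear on over-coverage.

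Two clean-ups are needed. First, discard the preliminary ``chaining'' paragraph that applies the fixed-$r$ lemma directly at $q_t$. Route (a) replaces that step rather than supplementing it, and route (b) is unnecessary.

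Second, the ``up to ties'' caveat disappears if you use the strict event at the left limit. Write $G(r)=\sum_i w_{i,t}F_i(r)$. The definition of $r^\ast$ gives $G(r^{\ast-})\le 1-\alpha-u$, and right-continuity gives $G(r^\ast)\ge 1-\alpha-u$. Apply Hoeffding to the indicators $\mathbf{1}\{R_i<r^\ast\}$, by the same proof as the lemma. The event $\sum_i w_{i,t}\mathbf{1}\{R_i<r^\ast\}<G(r^{\ast-})+u$ then fails with probability at most $e^{-2u^2 n_{\mathrm{eff},t}}=\eta/2$. On that event every $q<r^\ast$ has $\widehat F_{w,t}(q)<1-\alpha$. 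Hence $q_t\ge r^\ast$ holds exactly, whether or not the $F_i$ have atoms.
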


\begin{proof}
By construction, $\widehat F_{w,t}(q_t)\geq 1-\alpha$. Decomposing,
\begin{align*}
    \mathbb{P}\{Y_t\in C_t(X_t)\}
    &= F_t(q_t) \\
    &= \mathbb{E}\widehat F_{w,t}(q_t) + [F_t(q_t)-\mathbb{E}\widehat F_{w,t}(q_t)] \\
    &\geq \widehat F_{w,t}(q_t) - |\widehat F_{w,t}(q_t)-\mathbb{E}\widehat F_{w,t}(q_t)| - \sup_r|\mathbb{E}\widehat F_{w,t}(r)-F_t(r)|.
\end{align*}
The first term is at least $1-\alpha$. A union bound over the jump points of $\widehat F_{w,t}$ combined with Lemma~\ref{lem:concentration} bounds the second term by $\sqrt{\log(2|\mathcal{C}_t|/\eta)/(2n_{\mathrm{eff},t})}$ with probability $1-\eta$. By Lemma~\ref{lem:calibration-bias}, the third term is at most $L\bar d_t+\delta$. Putting these together gives the result. (We display $\log(2/\eta)$ for readability; replacing $\eta$ with $\eta/|\mathcal{C}_t|$ gives the precise uniform version.)
\end{proof}

\begin{remark}[Why the bound is not circular]
\label{rem:not-tautology}
The bound has three pieces with different origins. The bias $L\bar d_t$ comes from the Lipschitz condition (a property of the data) combined with the spectral weighting (a design choice). The mismatch $\delta$ captures what spectral features miss. The concentration term $\sqrt{\log(2/\eta)/(2n_{\mathrm{eff},t})}$ is a finite-sample penalty. Assumption~\ref{ass:lipschitz} is a statement about pairwise CDF smoothness, not about what the method outputs. The bound $L\bar d_t + \delta$ is a \emph{consequence} of combining the Lipschitz assumption with spectral weighting (Lemma~\ref{lem:calibration-bias}). The Gaussian kernel weights are what make $\bar d_t$ small for well-represented regimes; uniform weighting would give a large $\bar d_t$ and a vacuous bound.
\end{remark}

\begin{proposition}[Fully data-dependent coverage bound]
\label{prop:data-dependent}
At each time $t$, let $\hat L$ and $\hat\delta$ be estimated by regressing the empirical calibration biases $\hat\varepsilon_s = |\hat F_{w,s}(q_s) - (1-\alpha)|$ on the weight-averaged spectral distances $\bar d_s$ over a retrospective window $\mathcal{W}_t = \{t-M,\ldots,t-1\}$ of length $M$. If $|\hat\varepsilon_s| \leq 1$ and $\mathrm{Var}(\bar d_s) > 0$ over $\mathcal{W}_t$, then with probability at least $1-\eta-\eta'$,
\[
    \mathbb{P}\{Y_t\in C_t(X_t)\}
    \geq
    1-\alpha
    -\hat L\, \bar d_t - \hat\delta
    -\sqrt{\frac{\log(2|\mathcal{C}_t|/\eta)}{2n_{\mathrm{eff},t}}}
    -(1+\bar d_t)\sqrt{\frac{2\log(2/\eta')}{M}},
\]
where the last term bounds the estimation error, and everything on the right is computable at prediction time.
\end{proposition}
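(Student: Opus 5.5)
The plan is to reduce Proposition~\ref{prop:data-dependent} to Theorem~\ref{thm:coverage}, in its precise uniform form with $\log(2|\mathcal{C}_t|/\eta)$, by replacing the oracle quantity $L\bar d_t+\delta$ with its estimate plus an error term. The estimation error would be controlled by a separate concentration event of probability at least $1-\eta'$. A union bound over the two events then gives probability $1-\eta-\eta'$.

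Concretely, I would write
\[
L\bar d_t+\delta \le \hat L\bar d_t+\hat\delta+|L-\hat L|\,\bar d_t+|\delta-\hat\delta| .
\]
Plugging this into Theorem~\ref{thm:coverage} leaves only the task of showing
\[
|L-\hat L|\,\bar d_t+|\delta-\hat\delta|\le(1+\bar d_t)\sqrt{2\log(2/\eta')/M}
\]
on the $\eta'$ event. It suffices to show $\max\{|\hat L-L|,|\hat\delta-\delta|\}\le\sqrt{2\log(2/\eta')/M}$, which is exactly where the factor $(1+\bar d_t)$ comes from.

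The regression step is where the work lies. Lemma~\ref{lem:calibration-bias} together with the construction of $q_s$ says each $\hat\varepsilon_s$ is controlled by $L\bar d_s+\delta$ plus a zero-mean-type fluctuation bounded in $[-1,1]$, since $|\hat\varepsilon_s|\le1$. So I would interpret $(L,\delta)$ as the population least-squares coefficients of $\hat\varepsilon_s$ on $(1,\bar d_s)$ over $\mathcal{W}_t$, conditional on the design $\{\bar d_s\}$.

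The OLS estimates are then linear combinations $\sum_s a_s\hat\varepsilon_s$ of bounded variables, with deterministic coefficients given the design; $\mathrm{Var}(\bar d_s)>0$ ensures the design is invertible. I would then apply Hoeffding's inequality (or Azuma, for the martingale-difference structure along time), as in Lemma~\ref{lem:concentration}. Tuning the constant gives a deviation of order $\sqrt{2\log(2/\eta')/M}$ per coefficient. A dependent version could instead invoke Lemma~\ref{lem:mixing-concentration}.

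The main obstacle is this coefficient-concentration step. The OLS weights $a_s$ scale like $(\bar d_s-\bar d)/(M\,\widehat{\mathrm{Var}}(\bar d))$, so the honest rate carries a factor of $1/\widehat{\mathrm{Var}}(\bar d)$ that the stated bound suppresses. A second issue is that the assumption gives only an inequality, not a linear model, so "$L,\delta$" must be identified with the regression targets.

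To handle both, I would first try to absorb the design-dependent constant by normalizing $\bar d_s$ to unit empirical variance, or by stating the assumption that the design is well conditioned. Failing that, I would prove the bound with an explicit conditioning constant and note that it reduces to the displayed form when $\widehat{\mathrm{Var}}(\bar d)$ is bounded below by one. If needed, I would also replace $(L,\delta)$ by any pair that dominates the bias and is consistently estimated, since only an upper bound is used.
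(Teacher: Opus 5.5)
Your plan is the paper's own proof. The paper substitutes $\hat L,\hat\delta$ into Theorem~\ref{thm:coverage}, bounds $|(\hat L-L)\bar d_t+(\hat\delta-\delta)|$ by Hoeffding on the regression residuals, and union-bounds over $\eta,\eta'$. The two obstacles you flag are genuine, the paper's sketch does not resolve them, and your proposed repairs do not close them, so the proposal does not prove the displayed statement.

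\textbf{The identification step fails.} You claim that Lemma~\ref{lem:calibration-bias} and the construction of $q_s$ make $\hat\varepsilon_s$ equal to $L\bar d_s+\delta$ plus zero-mean noise. But the regressand $\hat\varepsilon_s=|\widehat F_{w,s}(q_s)-(1-\alpha)|$ never involves $F_s$. By definition of $q_s$, $\widehat F_{w,s}(q_s)-(1-\alpha_s)$ lies between $0$ and the jump of $\widehat F_{w,s}$ at $q_s$. So with $\alpha_s=\alpha$ and no ties, $\hat\varepsilon_s\le\max_i w_{i,s}\le n_{\mathrm{eff},s}^{-1/2}$ is pure quantile-discreteness overshoot; under the adaptive update it is essentially $|\alpha-\alpha_s|$. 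Lemma~\ref{lem:calibration-bias} controls $\mathbb{E}\widehat F_{w,s}-F_s$, which leaves no trace in $\hat\varepsilon_s$. Hence the regression targets bear no relation to any pair satisfying Assumption~\ref{ass:lipschitz}, and nothing forces $\hat L\bar d_t+\hat\delta$ to dominate the true bias. For example, take a jump in residual scale that the $\ell_2$-normalized periodogram cannot see (the failure mode the paper itself attributes to normalization). Assumption~\ref{ass:lipschitz} then holds only with a large $\delta$, and $F_t(q_t)$ falls far below $1-\alpha$. Meanwhile every $\hat\varepsilon_s$ stays at overshoot level, so for large $M$ and $n_{\mathrm{eff},t}$ the right-hand side stays near $1-\alpha$ and the inequality fails. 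Your fallback of substituting ``any pair that dominates the bias and is consistently estimated'' cannot be realized with this regressand. You would need a different response whose conditional mean provably upper-bounds the bias, with that domination stated as a hypothesis.

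\textbf{The concentration step does not give the displayed rate, even granting identification.} Apply Hoeffding once to the linear functional $(\hat L-L)\bar d_t+(\hat\delta-\delta)=\sum_s c_s\xi_s$, conditionally on the design and $\bar d_t$, with noise $\xi_s\in[-1,1]$. This gives
\[
\bigl|(\hat L-L)\bar d_t+(\hat\delta-\delta)\bigr|\le\sqrt{\frac{2\log(2/\eta')}{M}}\,\sqrt{1+\frac{(\bar d_t-m)^2}{V}},
\]
where $m$ and $V$ are the empirical mean and variance of $\{\bar d_s\}_{s\in\mathcal{W}_t}$. Bounding the two coefficients separately is worse: the union bound gives $\log(4/\eta')$, and the intercept alone carries $\sqrt{1+m^2/V}$.

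The factor $\sqrt{1+(\bar d_t-m)^2/V}$ is not dominated by $1+\bar d_t$ under the bare hypothesis $\mathrm{Var}(\bar d_s)>0$. Your escape clause $\widehat{\mathrm{Var}}(\bar d)\ge 1$ is never available: the $S_s$ are unit vectors with nonnegative entries, so $\bar d_s\in[0,\sqrt2]$ and $V\le 1/2$. Standardizing the design changes both $\hat L$ and $\bar d_t$, which changes the statement.

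Finally, Hoeffding needs independent noise and Azuma needs martingale differences. Neither is evident here, because consecutive $\hat\varepsilon_s$ and $\bar d_s$ are built from nearly the same calibration residuals and overlapping spectral windows. Invoking Lemma~\ref{lem:mixing-concentration} instead changes the constants and adds a mixing penalty.

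What your argument, and the paper's, can actually support is a modified proposition with three additions: an explicit domination hypothesis on the regression targets, the design-dependent factor above in place of $1+\bar d_t$, and an independence or mixing assumption on the regression noise.
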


\begin{proof}
Replace $L$ and $\delta$ with their estimates in Theorem~\ref{thm:coverage}. Bound the estimation error $|(\hat L - L)\bar d_t + (\hat\delta - \delta)|$ using Hoeffding's inequality on the regression residuals. A union bound over $\eta$ and $\eta'$ gives the result.
\end{proof}

\begin{corollary}[When DASC achieves near-exact coverage]
\label{cor:near-exact}
Under Assumption~\ref{ass:lipschitz}, suppose the bandwidth $h_t$ concentrates the spectral weights on a neighborhood of radius $r_t$ around $S_t$, so $\bar d_t \leq r_t$. If $L \leq \varepsilon/(2r_t)$, $\delta \leq \varepsilon/4$, and $n_{\mathrm{eff},t} \geq 2\log(2/\eta)/\varepsilon^2$ for some target slack $\varepsilon > 0$, then $\mathbb{P}\{Y_t \in C_t(X_t)\} \geq 1-\alpha-\varepsilon$. In plain terms: when spectral features resolve regimes well (small $L$), mismatch is small (small $\delta$), and there are enough spectrally similar calibration points (large $n_{\mathrm{eff},t}$), DASC gets coverage within $\varepsilon$ of the target.
\end{corollary}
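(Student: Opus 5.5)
The plan is to specialize Theorem~\ref{thm:coverage}, which already gives the three-part lower bound, and check that each of the three loss terms fits inside the slack $\varepsilon$ under the corollary's hypotheses. I fix $\alpha_t=\alpha$, work on the probability-$(1-\eta)$ event over the calibration residuals on which Theorem~\ref{thm:coverage} holds, and implicitly read the corollary's conclusion on that same event. On that event,
\[
\mathbb{P}\{Y_t\in C_t(X_t)\}\ \ge\ 1-\alpha-L\bar d_t-\delta-\sqrt{\log(2/\eta)/(2n_{\mathrm{eff},t})}.
\]

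\textbf{Step 1: bias.} The concentration hypothesis gives $\bar d_t\le r_t$. This holds because $\bar d_t$ is a convex combination of the distances $\|S_i-S_t\|_2$, each at most $r_t$ on the support of the weights. Hence $L\bar d_t\le L r_t\le \varepsilon/2$.

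\textbf{Step 2: mismatch.} The hypothesis $\delta\le\varepsilon/4$ enters directly.

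\textbf{Step 3: concentration.} From $n_{\mathrm{eff},t}\ge 2\log(2/\eta)/\varepsilon^2$ I get $\log(2/\eta)/(2n_{\mathrm{eff},t})\le \varepsilon^2/4$, so the square-root term is at most $\varepsilon/2$.

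\textbf{Step 4 (main obstacle): the budget does not close.} Summing Steps 1--3 naively gives $\varepsilon/2+\varepsilon/4+\varepsilon/2=5\varepsilon/4$, not $\varepsilon$. An extra $\varepsilon/4$ must be recovered inside the concentration step, since Steps 1 and 2 are tight given only the stated hypotheses.

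My first attempt is to redo the concentration part of the proof of Theorem~\ref{thm:coverage} one-sidedly. Only the lower deviation $\widehat F_{w,t}(q_t)-\mathbb{E}\widehat F_{w,t}(q_t)$ hurts coverage, so Lemma~\ref{lem:concentration} can be used without the factor $2$, and the relevant deviation is $u=\sqrt{\log(1/\eta)/(2n_{\mathrm{eff},t})}$. I would then try to show that, under the stated lower bound on $n_{\mathrm{eff},t}$, this $u$ is at most $\varepsilon/4$. If that cannot be obtained for all $\eta$, I will state explicitly the additional restriction on $\eta$ (or on how the $\varepsilon$-budget is split among the three terms) under which it holds.

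I expect this sharpening to fail for general $\eta$. One-sidedness only replaces $\log(2/\eta)$ by $\log(1/\eta)$, which is still of the same order, so $u$ remains close to $\varepsilon/2$ rather than $\varepsilon/4$. In that case the honest conclusion is that the corollary's constants yield $1-\alpha-\tfrac54\varepsilon$. The stated $1-\alpha-\varepsilon$ would then require tightening one hypothesis, for example $n_{\mathrm{eff},t}\ge 8\log(2/\eta)/\varepsilon^2$ or $L\le\varepsilon/(4r_t)$. With either tightening, Steps 1--3 combine with Step 4 to give exactly slack $\varepsilon$.
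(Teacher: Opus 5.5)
You take the same route as the paper, whose proof is the single line ``substitute the conditions into Theorem~\ref{thm:coverage}'', and your Step~4 is right: that substitution does not give the stated constant. Under the hypotheses as written, the three loss terms are bounded by $\varepsilon/2$, $\varepsilon/4$ and $\varepsilon/2$. What actually follows is therefore $\mathbb{P}\{Y_t\in C_t(X_t)\}\ge 1-\alpha-\tfrac{5}{4}\varepsilon$, so the gap you flag sits in the paper's one-line proof, not in your reasoning.

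You are also right to expect the one-sided refinement to fail. Replacing $\log(2/\eta)$ by $\log(1/\eta)$ brings the deviation down to $\varepsilon/4$ only if $\log(1/\eta)\le\tfrac{1}{4}\log(2/\eta)$, i.e.\ $\eta\ge 2^{-1/3}\approx 0.79$, which is vacuous. A fixed-$r$ bound would in any case have to be made uniform in $r$, because $q_t$ depends on the residuals.

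Your proposed repair is the right one. You can keep slack $\varepsilon$ by requiring $n_{\mathrm{eff},t}\ge 8\log(2/\eta)/\varepsilon^2$ or $L\le\varepsilon/(4r_t)$, or else weaken the conclusion to slack $\tfrac{5}{4}\varepsilon$. For a fully precise statement, two further adjustments are needed. First, attach ``with probability at least $1-\eta$ over the calibration residuals'' to the conclusion, as you already do. Second, the proof of Theorem~\ref{thm:coverage} actually produces $\log(2|\mathcal{C}_t|/\eta)$ after the union bound over jump points. In the $n_{\mathrm{eff},t}$-based repair the condition should therefore read $n_{\mathrm{eff},t}\ge 8\log(2|\mathcal{C}_t|/\eta)/\varepsilon^2$.

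One small point in Step~1: take $\bar d_t\le r_t$ directly as the hypothesis rather than deriving it from the support of the weights. Gaussian kernel weights are positive on the entire gated pool, so the distances on the support need not all be at most $r_t$.
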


\begin{proof}
Substitute the conditions into Theorem~\ref{thm:coverage}.
\end{proof}

\begin{remark}[One-sided bound]
\label{rem:one-sided}
Theorem~\ref{thm:coverage} gives a lower bound on coverage but does not prevent over-coverage. This explains the conservative weather result (96.6\% at the 90\% target). The adaptive update gradually corrects this, but with step size $\gamma = 0.015$ the correction is slow.
\end{remark}

\begin{proposition}[Why spectral-only calibration can fail under drift]
If two times $s$ and $t$ have close spectral features, $d_{\mathrm{spec}}(s,t)\leq h$, but their residual laws differ by $\Delta_{s,t}=\sup_r |F_s(r)-F_t(r)|$, any conformal rule that weights time $s$ heavily based on spectral similarity alone can incur a coverage error of order $\Delta_{s,t}$ at time $t$. When the spectral representation is invariant to a mean or variance shift that changes residual distributions, spectral similarity by itself does not guarantee local coverage.
\end{proposition}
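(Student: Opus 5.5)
The plan is to prove this by explicit construction rather than by a general inequality, because the statement is an existence claim: spectral-only weighting \emph{can} incur a coverage error of order $\Delta_{s,t}$. I will take the simplest weighting that puts heavy weight on time $s$, namely a point mass $w_{s,t}=1$. This is the limit $h_t\to 0$ of the Gaussian kernel when $s$ is the unique spectral nearest neighbor. More generally, it covers any weights with $w_{s,t}\geq 1-\epsilon$.

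With this weighting, $\widehat F_{w,t}=\mathbf{1}\{R_s\leq\cdot\}$ in the degenerate case. To avoid single-sample randomness, I will let the calibration pool contain many times $s_1,\dots,s_n$ that share the law $F_s$ and the spectral feature $S_s$. Then, conditional on the weights, Lemma~\ref{lem:concentration} makes $\widehat F_{w,t}$ concentrate around $F_s$. Hence $q_t\approx F_s^{-1}(1-\alpha)$, up to a term of order $n_{\mathrm{eff},t}^{-1/2}$.

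Next I construct a concrete pair of laws. Use the residual scale family $R_s\sim|\sigma_s Z|$ and $R_t\sim|\sigma_t Z|$ with $\sigma_t>\sigma_s$. Generate both windows by the same spectral shape with different amplitudes, so that after the $\ell_2$ normalization in Algorithm~\ref{alg:spectral} we get $S_s=S_t$ exactly. Then $d_{\mathrm{spec}}(s,t)=0\le h$. Because the window is mean-centered before the FFT, a pure mean shift also leaves $S$ unchanged. This realizes the ``invariant to a mean or variance shift'' clause in the statement.

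The coverage at $t$ is $F_t(q_t)=F_t(F_s^{-1}(1-\alpha))$, so the error is
\[
(1-\alpha)-F_t(F_s^{-1}(1-\alpha)).
\]
I then show this is of order $\Delta_{s,t}$. The upper direction is immediate: $|F_t(q)-F_s(q)|\le\Delta_{s,t}$ at $q=q_t$ gives error at most $\Delta_{s,t}$. This matches the bias term in Lemma~\ref{lem:calibration-bias}, since $\delta$ must absorb $\Delta_{s,t}$ when $L\|S_s-S_t\|=0$.

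For the lower direction I need the gap at the specific point $q_t$ to be comparable to the supremum $\Delta_{s,t}$. This is the main obstacle, because the supremum could be attained far from the $(1-\alpha)$-quantile. I will handle it with a one-parameter family where the discrepancy is controlled uniformly. Letting $\sigma_t/\sigma_s=1+\kappa$ with small $\kappa$, both $F_t(F_s^{-1}(1-\alpha))-(1-\alpha)$ and $\Delta_{s,t}$ are $\Theta(\kappa)$. This follows from a first-order expansion in $\kappa$ with the half-normal density bounded away from zero near the quantile. Their ratio therefore stays bounded above and below by constants depending only on $\alpha$. The upper bound on the ratio is automatic; only the lower bound needs this expansion.

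Finally, I combine the pieces. Adding back the concentration slack gives miscoverage at least $c_\alpha\Delta_{s,t}-O(n_{\mathrm{eff},t}^{-1/2})$. This shows that spectral proximity alone cannot control the coverage error, which is the motivation for the drift gate.
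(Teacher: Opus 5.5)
Your proposal is correct, and it takes a different and more careful route than the paper. The paper's proof is a short mixture heuristic. If the weights put mass on residuals with law $F_s$, the calibration law is a mixture containing $F_s$, and its Kolmogorov distance to $F_t$ is bounded below by $\Delta_{s,t}$ up to the mixture weight; the paper then implicitly treats that sup-norm gap as the coverage error. You instead build an explicit witness with three ingredients:
\begin{itemize}
\item a calibration pool whose residuals all follow $F_s$, so any weighting, spectral or not, gives $\widehat F_{w,t}\approx F_s$ up to $O(n_{\mathrm{eff},t}^{-1/2})$;
\item a half-normal scale family with $\sigma_t=(1+\kappa)\sigma_s$;
\item the observation that the mean-centering and $\ell_2$ normalization in Algorithm~\ref{alg:spectral} make the banded periodogram blind to level and amplitude.
\end{itemize}

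The step you flag as the main obstacle is exactly what the paper's argument skips: $\Delta_{s,t}$ is a supremum over $r$, while coverage is evaluated only at $q_t$. The paper also glosses over possible cancellation from the other mixture components, which your all-$F_s$ pool avoids. Your expansion settles the obstacle. Let $z$ be the standard normal $(1-\alpha/2)$-quantile and $\varphi$ the standard normal density. The gap at the quantile is $2\kappa z\varphi(z)+O(\kappa^2)$, while $\Delta_{s,t}=2\kappa\varphi(1)+O(\kappa^2)$. So $c_\alpha\approx z\varphi(z)/\varphi(1)$, about $0.70$ at $\alpha=0.1$.

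Your route buys a real lower bound $c_\alpha\Delta_{s,t}-O(n_{\mathrm{eff},t}^{-1/2})$ on miscoverage, and a concrete check of the invariance clause for the paper's own feature map. It gives up generality, since it is a single parametric example, but that suffices for a ``can incur'' statement. The paper's mixture phrasing looks more general but, as written, only lower-bounds a Kolmogorov distance, not the coverage error.

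One minor caveat: the paper's window $W_t$ ends at $Y_t$. Windows carrying independent residual noise therefore cannot be exact rescalings of one another, so $S_s=S_t$ holds only approximately, for example in a high signal-to-noise construction. This is harmless, because the statement only requires $d_{\mathrm{spec}}(s,t)\le h$.
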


\begin{proof}
If the weighted calibration law puts substantial mass on residuals from $F_s$, its quantile approximates a mixture involving $F_s$ rather than $F_t$. The Kolmogorov distance between this mixture and $F_t$ is bounded below by the discrepancy $\Delta_{s,t}$ (up to the mixture weight). The drift score in DASC is designed to catch this even when spectral features are close.
\end{proof}

\begin{theorem}[Drift-gating bias-variance tradeoff]
Let $w_t$ and $w_t^g$ be the ungated and drift-gated weights, with weight-averaged spectral distances $\bar d_t$, $\bar d_t^g$ and effective sample sizes $n_{\mathrm{eff},t}$, $n_{\mathrm{eff},t}^g$. Under Assumption~\ref{ass:lipschitz}, with probability $1-\eta$, the coverage-loss bounds are
\[
    B_t = L\bar d_t+\delta+\sqrt{\tfrac{\log(2/\eta)}{2n_{\mathrm{eff},t}}},
    \qquad
    B_t^g = L\bar d_t^g+\delta+\sqrt{\tfrac{\log(2/\eta)}{2n_{\mathrm{eff},t}^g}}.
\]
Gating improves the bound when $L(\bar d_t-\bar d_t^g) > \sqrt{\log(2/\eta)/(2n_{\mathrm{eff},t}^g)} - \sqrt{\log(2/\eta)/(2n_{\mathrm{eff},t})}$.
\end{theorem}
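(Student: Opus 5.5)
The statement follows from two applications of Theorem~\ref{thm:coverage}, one for each weight vector, followed by a comparison of the two bounds. The proof of Theorem~\ref{thm:coverage} never uses the particular form of the weights. It needs only three facts: the weights are nonnegative and sum to one (so Lemma~\ref{lem:calibration-bias} applies), they are fixed given the conditioning information (so Lemma~\ref{lem:concentration} applies), and $q_t$ is the weighted quantile, so that $\widehat F_{w,t}(q_t)\ge 1-\alpha$.

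First I check that these facts hold for both the ungated weights $w_t$ and the gated weights $w_t^g$. The gated weights are the ungated ones multiplied by the indicator $\mathbf{1}\{d_{\mathrm{time}}(i,t)\le m_t(D_t)\}$ and then renormalized, so they are again a convex combination. They are supported on a subset of $\mathcal{C}_t$, and Assumption~\ref{ass:lipschitz} holds for every index in that subset.

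The one point needing care is that the gate depends on $D_t$, which is itself computed from spectral features. I will therefore condition on the spectral features, as Lemma~\ref{lem:concentration} already does by conditioning on $\{w_{i,t}\}$. The gated weights are then deterministic given that information, and the concentration step goes through unchanged. This yields the two coverage-loss bounds $B_t$ and $B_t^g$, each holding with probability $1-\eta$.

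To obtain a single event on which both bounds hold simultaneously, I would either allocate $\eta/2$ to each application or note that the statement concerns the bounds themselves rather than the realized coverage. The cleanest reading is the latter: $B_t$ and $B_t^g$ are deterministic functions of prediction-time quantities, so comparing them needs no probability at all. I would state that the probability-$(1-\eta)$ qualifier attaches to the validity of each bound separately, and remark that a union bound with $\eta/2$ gives joint validity at the cost of replacing $\log(2/\eta)$ by $\log(4/\eta)$.

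The comparison is then pure algebra. The $\delta$ terms cancel, so
\[
B_t - B_t^g = L(\bar d_t-\bar d_t^g) - \Big(\sqrt{\tfrac{\log(2/\eta)}{2n_{\mathrm{eff},t}^g}}-\sqrt{\tfrac{\log(2/\eta)}{2n_{\mathrm{eff},t}}}\Big),
\]
and this difference is positive exactly under the displayed condition. I would add the interpretation that gating typically lowers $n_{\mathrm{eff}}$, which raises the variance term, while it lowers $\bar d$ when it removes stale, spectrally distant residuals, which reduces the bias term.

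I do not expect the main obstacle to be the algebra. It is the measurability subtlety that the gated pool is selected using data, which I handle by conditioning on the spectral features. The same caveat, that the loose $\log(2/\eta)$ stands in for the uniform version of Theorem~\ref{thm:coverage}, carries over directly.
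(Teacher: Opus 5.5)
Your proposal is correct and follows the same route as the paper. The paper's proof is the one-line instruction to apply Theorem~\ref{thm:coverage} with each set of weights; your extra checks (that the gated weights remain a convex combination over a subset of $\mathcal{C}_t$, that conditioning on the spectral features handles the data-dependent gate, and that comparing $B_t$ with $B_t^g$ is deterministic algebra in which $\delta$ cancels) make explicit what the paper leaves implicit.
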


\begin{proof}
Apply Theorem~\ref{thm:coverage} with each set of weights.
\end{proof}

\begin{remark}[When gating helps]
Gating helps when dropping spectrally distant points reduces $\bar d_t$ by more than the smaller $n_{\mathrm{eff},t}^g$ increases the concentration term. The condition uses only computable quantities.
\end{remark}

\begin{remark}[When DASC does not beat simpler methods]
\label{rem:when-dasc-doesnt-help}
DASC is not uniformly better than everything. Three situations where simpler methods can be competitive:
(a) \textbf{No spectral structure.} If the stream is essentially i.i.d.\ noise, spectral weights become roughly uniform and DASC reduces to adaptive conformal with extra overhead.
(b) \textbf{Stable spectral structure, no drift.} If the stream is spectrally recurrent without drift (small $\bar d_t$, small $L$), the drift gate never activates and DASC behaves like spectral-only conformal plus the adaptive update. The extra machinery does not help with width, though it still provides diagnostics. This explains the finance result.
(c) \textbf{Complete regime novelty.} If every new window is spectrally unlike all past windows, spectral weights carry no information, and a plain recency-based method may work better.

The diagnostic quantities $(\bar d_t, D_t, n_{\mathrm{eff},t})$ themselves tell you which situation you are in, so you can assess in real time whether DASC's components are pulling their weight.
\end{remark}

\begin{theorem}[Long-run calibration]
\label{thm:long-run}
If $\alpha_t\in[\alpha_{\min},\alpha_{\max}]$ for all $t$, then for any sequence of errors $\{E_t\}_{t=1}^T$ generated by the adaptive update,
\[
    \left|\frac{1}{T}\sum_{t=1}^T E_t - \alpha\right|
    \leq \frac{\alpha_{\max}-\alpha_{\min}}{\gamma T} + B_T,
\]
where $B_T = O(1/T)$ captures boundary projection effects.
\end{theorem}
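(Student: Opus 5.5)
The plan is the standard telescoping argument from adaptive conformal inference \citep{gibbs2021adaptive}, adapted to the projected update. First I handle the unprojected recursion, then I account separately for the projection.

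\emph{Step 1: the unprojected case.} Suppose $\alpha_{t+1}=\alpha_t+\gamma(\alpha-E_t)$ holds exactly for every $t$. Summing from $t=1$ to $T$ telescopes to
\[
\alpha_{T+1}-\alpha_1=\gamma\sum_{t=1}^T(\alpha-E_t).
\]
Dividing by $\gamma T$ gives $\bigl|\frac1T\sum_t E_t-\alpha\bigr| = |\alpha_{T+1}-\alpha_1|/(\gamma T)$. Since every iterate lies in $[\alpha_{\min},\alpha_{\max}]$ by hypothesis, this is at most $(\alpha_{\max}-\alpha_{\min})/(\gamma T)$. This step is purely deterministic: it holds for any error sequence, whatever the data, so neither Assumption~\ref{ass:lipschitz} nor the spectral weights play any role.

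\emph{Step 2: projection corrections.} With the projection, write
\[
\alpha_{t+1}=\alpha_t+\gamma(\alpha-E_t)+\xi_t,
\]
where $\xi_t$ is the correction introduced by $\Pi_{[\alpha_{\min},\alpha_{\max}]}$. Telescoping now gives
\[
\frac1T\sum_t(\alpha-E_t)=\frac{\alpha_{T+1}-\alpha_1}{\gamma T}-\frac{1}{\gamma T}\sum_t\xi_t,
\]
so the natural choice is $B_T:=\bigl|\sum_t\xi_t\bigr|/(\gamma T)$. Two facts control each correction. Because $E_t\in\{0,1\}$, every unprojected step has size at most $\gamma\max(\alpha,1-\alpha)\le\gamma$, so $|\xi_t|\le\gamma$. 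Moreover, $\xi_t\neq0$ only when the pre-projection iterate leaves the interval.

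\emph{Step 3: showing $B_T=O(1/T)$ (the main obstacle).} This requires the number of clipping events, or at least their signed total, to be $O(1)$. I would first try the argument of Gibbs--Cand\`es: if $\alpha\in(\alpha_{\min},\alpha_{\max})$ and the interval is wide enough, the drift $\gamma(\alpha-E_t)$ pushes the iterate back inward after a clip. At the upper end the pre-projection value can exceed $\alpha_{\max}$ only after a covered step, and at the lower end it can fall below $\alpha_{\min}$ only after a miss. I would then try to argue that consecutive clips at the same boundary require returning there, which costs displacement comparable to the interval width.

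I am not confident this yields a uniform $O(1)$ bound on the number of clips for arbitrary error sequences. An adversarial sequence of $E_t$ could in principle keep pushing the iterate against the upper bound. If that happens, I would settle for stating the bound with $B_T=\frac{1}{\gamma T}\bigl|\sum_t\xi_t\bigr|$ explicitly, and invoke $B_T=O(1/T)$ under a boundedness condition on the total clipped mass. A fallback that handles the upper boundary cleanly is the observation that $\alpha_t\ge1$ forces $q_t=0$ and hence a miss. Mirroring this at $\alpha_{\max}=0.35$ is exactly where the argument is delicate, and it is where I expect the most effort to go.
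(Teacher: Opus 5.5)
\paragraph{Where the proof breaks.} Your Steps~1--2 are exactly the paper's argument. You telescope the projected recursion to $\alpha_{T+1}-\alpha_1=\gamma\sum_t(\alpha-E_t)+\sum_t\xi_t$ and bound $|\alpha_{T+1}-\alpha_1|$ by $\alpha_{\max}-\alpha_{\min}$. The paper then simply asserts that the projection residuals sum to $O(1)$. So your Step~3 carries all the content, and your doubt is justified: that claim is false, and no argument can close it.

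The Gibbs--Cand\`es self-correction needs the boundary to \emph{force} the error. That means $q_t=-\infty$ (empty interval, sure miss) when $\alpha_t\ge 1$, and $q_t=+\infty$ (sure coverage) when $\alpha_t<0$. It is $q_t=-\infty$, not $q_t=0$, that forces the miss. With $0<\alpha_{\min}<\alpha<\alpha_{\max}<1$, however, $C_t$ is always a finite nonempty interval built from past residuals. Nothing stops the data from landing inside it (or outside it) forever.

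Concretely, take $Y_t=ct$ with $c>0$ and the lag forecaster. Every residual equals $c$, so $q_t=c$ for any weights and any $\alpha_t<1$, and every interval covers. Then $E_t\equiv 0$, and $\alpha_t$ reaches $\alpha_{\max}$ in finitely many steps. It is then clipped by $\xi_t=-\gamma\alpha$ at every later step, so for all large $T$
\[
B_T=\frac{1}{\gamma T}\Bigl|\sum_{t=1}^T\xi_t\Bigr|=\alpha-\frac{\alpha_{\max}-\alpha_1}{\gamma T}\longrightarrow\alpha,
\qquad
\Bigl|\frac1T\sum_{t=1}^T E_t-\alpha\Bigr|=\alpha\ \text{for every }T.
\]
The choice $Y_t=t^2$ gives strictly increasing residuals, $E_t\equiv 1$, and the mirror failure at $\alpha_{\min}$. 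The paper's own weather run is consistent with this: miscoverage $0.034$ over thousands of steps, essentially unchanged across the $\gamma$ sweep. By your Step~2 identity, that forces $B_T$ of order $0.06$ there.

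\paragraph{What you can salvage.} Your fallback of assuming bounded clipped mass does not rescue the statement. By your own Step~2 identity, $|\bar E_T-\alpha|=O(1/T)$ holds if and only if $\sum_t\xi_t=O(1)$, where $\bar E_T=\frac1T\sum_t E_t$. That hypothesis is the conclusion in disguise. What you can actually prove is one of the following.

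(a) The exact identity $\bar E_T-\alpha=\frac{\alpha_1-\alpha_{T+1}}{\gamma T}+\frac{1}{\gamma T}\sum_t\xi_t$, with the clipped mass reported as an online diagnostic. This gives the displayed bound with $B_T$ explicit, but no rate.

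(b) Remove the projection, or enlarge it to contain $[-\gamma,1+\gamma]$, and take $\alpha_1\in[0,1]$. Then your Step~3 heuristic becomes a proof. Iterates outside $[0,1]$ are pushed back by forced errors, and each step moves at most $\gamma$, so $\alpha_t\in[-\gamma,1+\gamma]$ for all $t$. Step~1 alone then gives $|\bar E_T-\alpha|\le(\max\{\alpha_1,1-\alpha_1\}+\gamma)/(\gamma T)$ with $B_T=0$.

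(c) Read the hypothesis ``$\alpha_t\in[\alpha_{\min},\alpha_{\max}]$ for all $t$'' as a condition on the \emph{unprojected} iterates, i.e.\ the projection never binds. Then Step~1 is the whole proof with $B_T=0$. But this is a data-dependent assumption, and the example above violates it.
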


\begin{proof}
The adaptive update is a projected online gradient step. Telescoping gives $\alpha_{T+1} - \alpha_1 = \gamma\sum_{t=1}^T (\alpha - E_t) + \sum_{t=1}^T \epsilon_t$, where $\epsilon_t$ is the projection residual. Since $\alpha_t \in [\alpha_{\min}, \alpha_{\max}]$, we get $|\alpha_1 - \alpha_{T+1}|/\gamma \leq (\alpha_{\max} - \alpha_{\min})/\gamma$. The projection residuals sum to $O(1)$. Dividing by $T$ gives the bound, following the analysis in \citet{gibbs2021adaptive}.
\end{proof}

\begin{corollary}[Combined guarantee]
\label{cor:combined}
Under Assumption~\ref{ass:lipschitz}, adaptive DASC satisfies both: a per-step coverage gap bounded by $L\bar d_t + \delta + \sqrt{\log(2|\mathcal{C}_t|/\eta)/(2n_{\mathrm{eff},t})}$ (with probability $1-\eta$), and long-run miscoverage approaching $\alpha$ at rate $O(1/T)$.
\end{corollary}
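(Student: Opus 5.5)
The combined guarantee is simply the conjunction of the two earlier results, so the plan is to verify that each applies to adaptive DASC as stated. The one point that needs care is that Theorem~\ref{thm:coverage} was proved for fixed $\alpha_t=\alpha$, while adaptive DASC uses a time-varying $\alpha_t$.

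\textbf{Per-step part.} At a given prediction time $t$, $\alpha_t$ is determined by $E_1,\dots,E_{t-1}$, so it is $\mathcal{H}_{t-1}$-measurable. I would rerun the proof of Theorem~\ref{thm:coverage} conditionally on $\mathcal{H}_{t-1}$, treating $\alpha_t$ as a constant.

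1. By construction of the weighted quantile, $\widehat F_{w,t}(q_t)\ge 1-\alpha_t$.
2. Write $F_t(q_t)=\mathbb{E}\widehat F_{w,t}(q_t)+[F_t(q_t)-\mathbb{E}\widehat F_{w,t}(q_t)]$.
3. Lemma~\ref{lem:calibration-bias} bounds the bias term by $L\bar d_t+\delta$. It uses only Assumption~\ref{ass:lipschitz} and the convexity of the gated weights, and never mentions $\alpha_t$.
4. Lemma~\ref{lem:concentration} (or Lemma~\ref{lem:mixing-concentration}), with a union bound over the at most $|\mathcal{C}_t|$ jump points of $\widehat F_{w,t}$, bounds the fluctuation term by $\sqrt{\log(2|\mathcal{C}_t|/\eta)/(2n_{\mathrm{eff},t})}$ with probability $1-\eta$. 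I would use the precise uniform version here, not the displayed $\log(2/\eta)$.
5. Combining these gives coverage at least $1-\alpha_t-L\bar d_t-\delta-\sqrt{\cdot}$. This bounds the coverage gap relative to the current target $1-\alpha_t$ by the stated quantity, which is how I would read ``per-step coverage gap.''

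\textbf{Main obstacle.} The concentration step needs the weights to be independent of the calibration residual indicators. The weights depend on the spectral features, and $\alpha_t$ depends on past errors, which are themselves functions of past residuals. I would handle this by conditioning on the weights, as Lemma~\ref{lem:concentration} already does, and by noting that $\alpha_t$ enters only through the threshold. Because the concentration bound is uniform over all thresholds $r$ (via the union bound over jump points), the data-dependent choice of $q_t$ through $\alpha_t$ costs nothing beyond the $|\mathcal{C}_t|$ factor.

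\textbf{Long-run part.} The iterates are always projected onto $[\alpha_{\min},\alpha_{\max}]$, so the hypothesis of Theorem~\ref{thm:long-run} holds automatically. That theorem then gives
\[
\Bigl|\frac{1}{T}\sum_{t=1}^T E_t-\alpha\Bigr|\le \frac{\alpha_{\max}-\alpha_{\min}}{\gamma T}+B_T,
\]
with $B_T=O(1/T)$, so long-run miscoverage approaches $\alpha$ at rate $O(1/T)$. This holds for any error sequence, so it needs no probabilistic assumption and holds jointly with the per-step event.
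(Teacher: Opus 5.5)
Your route is the paper's own. The paper gives no separate argument; the corollary is just Theorem~\ref{thm:coverage} and Theorem~\ref{thm:long-run} placed side by side.

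\textbf{Per-step half.} Here you are more careful than the paper. Reading the gap as a shortfall below $1-\alpha_t$ is forced, not optional: against the nominal $1-\alpha$, the bound picks up an extra $(\alpha_t-\alpha)_+$, which can reach $\alpha_{\max}-\alpha=0.25$ under the defaults. One step fails as written, though. Conditioning on $\mathcal{H}_{t-1}$ to freeze $\alpha_t$ also freezes every calibration residual $R_i$ with $i<t$. Then $\widehat F_{w,t}$ is deterministic, and Lemma~\ref{lem:concentration} has no randomness left to act on. Drop that conditioning and keep only the argument from your ``main obstacle'' paragraph. On a single event of probability at least $1-\eta$, $\sup_r|\widehat F_{w,t}(r)-\sum_i w_{i,t}F_i(r)|$ is controlled, and the chain of inequalities then holds at whatever $q_t$ the data-dependent $\alpha_t$ produces. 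That uniform event must come from a union bound over a \emph{fixed} grid, not over the random jump points of $\widehat F_{w,t}$. For instance, take about $|\mathcal{C}_t|$ quantiles of $\sum_i w_{i,t}F_i$, at the price of an extra $O(1/|\mathcal{C}_t|)$ discretization term. This looseness comes from the paper's proof of Theorem~\ref{thm:coverage}.

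\textbf{Long-run half.} This is the genuine gap: you, like the paper, take it directly from Theorem~\ref{thm:long-run}. That theorem's claim that the projection residuals sum to $O(1)$ is unproved, and it is false for projection onto $[\alpha_{\min},\alpha_{\max}]=[0.01,0.35]$. Take constant residuals $R_t\equiv c$, which satisfy Assumption~\ref{ass:lipschitz} with $L=\delta=0$. Then $q_t=c$ at every level, so $E_t\equiv 0$. The level $\alpha_t$ climbs to $\alpha_{\max}$ and is clipped at every later step, and $\frac1T\sum_t E_t=0$ for every $T$. What telescoping actually gives is
\[
\Bigl|\frac1T\sum_{t=1}^T E_t-\alpha\Bigr|\le\frac{\alpha_{\max}-\alpha_{\min}}{\gamma T}+\frac{N_T\max\{\alpha,1-\alpha\}}{T},
\]
where $N_T$ is the number of steps at which the projection is active. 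An $O(1/T)$ rate therefore needs an extra hypothesis such as $N_T=O(1)$. Alternatively, drop the projection as in \citet{gibbs2021adaptive}. There, the convention that $\alpha_t\le 0$ gives an infinite interval and $\alpha_t\ge 1$ an empty one keeps the iterates in $[-\gamma,1+\gamma]$ with no clipping at all.

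This is not a corner case. In the paper's weather run, $\gamma=0.015$ and there are roughly $2.4\times 10^4$ steps, so the first term is about $10^{-3}$. Yet the realized miscoverage is $0.034$, which forces the upper projection to be active at well over half of the steps.
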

\label{sec:theory-end}

\section{Experiments}
\label{sec:experiments}

We compare six methods:
(i) \textbf{Rolling conformal}: a sliding window of $m_{\max}$ past residuals with uniform weights;
(ii) \textbf{Spectral-only conformal}: Gaussian kernel weights from spectral distance, no drift gate or adaptive update;
(iii) \textbf{Adaptive conformal inference} \citep{gibbs2021adaptive}: online $\alpha_t$ update with uniform weights;
(iv) \textbf{Conformal PID} \citep{angelopoulos2023pid}: PID coverage control;
(v) \textbf{Exponentially weighted conformal}: recency-weighted calibration with exponential decay;
(vi) \textbf{DASC}: spectral weighting + drift gating + adaptive calibration.

All methods use the same point forecaster and calibration pool size. We intentionally use simple lag forecasters (Remark~\ref{rem:forecaster}) to isolate the calibration mechanism.

Performance is measured by empirical coverage, average interval width, interval score, local coverage by regime, effective sample size, and drift detection. The interval score \citep{gneiting2007strictly} is a proper scoring rule combining coverage and width:
\[
    \mathrm{IS}_t = (u_t - l_t) + \frac{2}{\alpha}(l_t - Y_t)\mathbf{1}\{Y_t < l_t\} + \frac{2}{\alpha}(Y_t - u_t)\mathbf{1}\{Y_t > u_t\}.
\]
Lower is better; methods that get narrow intervals by under-covering are penalized.

\subsection{Synthetic Streams}

\paragraph{Data generation.}
Each synthetic stream has $T=1{,}000$ observations in three phases. The first 350 come from a recurring sinusoidal regime (regime A). Observations 351--700 switch to a second sinusoidal regime (B) with different frequency and amplitude. At $t=701$, we apply an abrupt level shift, variance increase, and gradual frequency drift all at once. This guarantees each run has both a regime-recurrence phase (where spectral matching should help) and a post-drift phase (where it should not). All methods use the same lag forecaster $\hat f_t(X_t) = Y_{t-1}$. The target is $\alpha = 0.10$ (90\% coverage). Results are averaged over ten random seeds.

\begin{table}[t]
\centering
\caption{Synthetic stream results (ten seeds, $T=1{,}000$). IS = interval score (lower is better).}
\label{tab:synthetic}
\begin{tabular}{lrrrrrr}
\toprule
Method & Coverage & SE & Avg.\ width & SE & IS & SE \\
\midrule
DASC & 0.8995 & 0.0008 & 2.7517 & 0.0351 & 7.957 & 0.084 \\
Adaptive conformal & 0.8990 & 0.0006 & 2.7381 & 0.0312 & 7.947 & 0.080 \\
Conformal PID & 0.8996 & 0.0005 & 2.9057 & 0.0333 & 7.664 & 0.083 \\
Exp.\ weighted & 0.8883 & 0.0010 & 2.6022 & 0.0324 & 8.309 & 0.089 \\
Rolling conformal & 0.8865 & 0.0015 & 2.5952 & 0.0318 & 8.388 & 0.114 \\
Spectral-only & 0.8714 & 0.0018 & 2.4806 & 0.0314 & 8.859 & 0.159 \\
\bottomrule
\end{tabular}
\end{table}

The interval score column (Table~\ref{tab:synthetic}) separates the methods more clearly than coverage or width alone. Spectral-only and rolling conformal have the worst scores despite the narrowest widths, because their miscoverage penalties dominate. Conformal PID gets the best score through wider but consistently calibrated intervals. DASC and adaptive conformal are in between---near-nominal coverage with moderate width.

\begin{figure}[t]
    \centering
    \includegraphics[width=\textwidth]{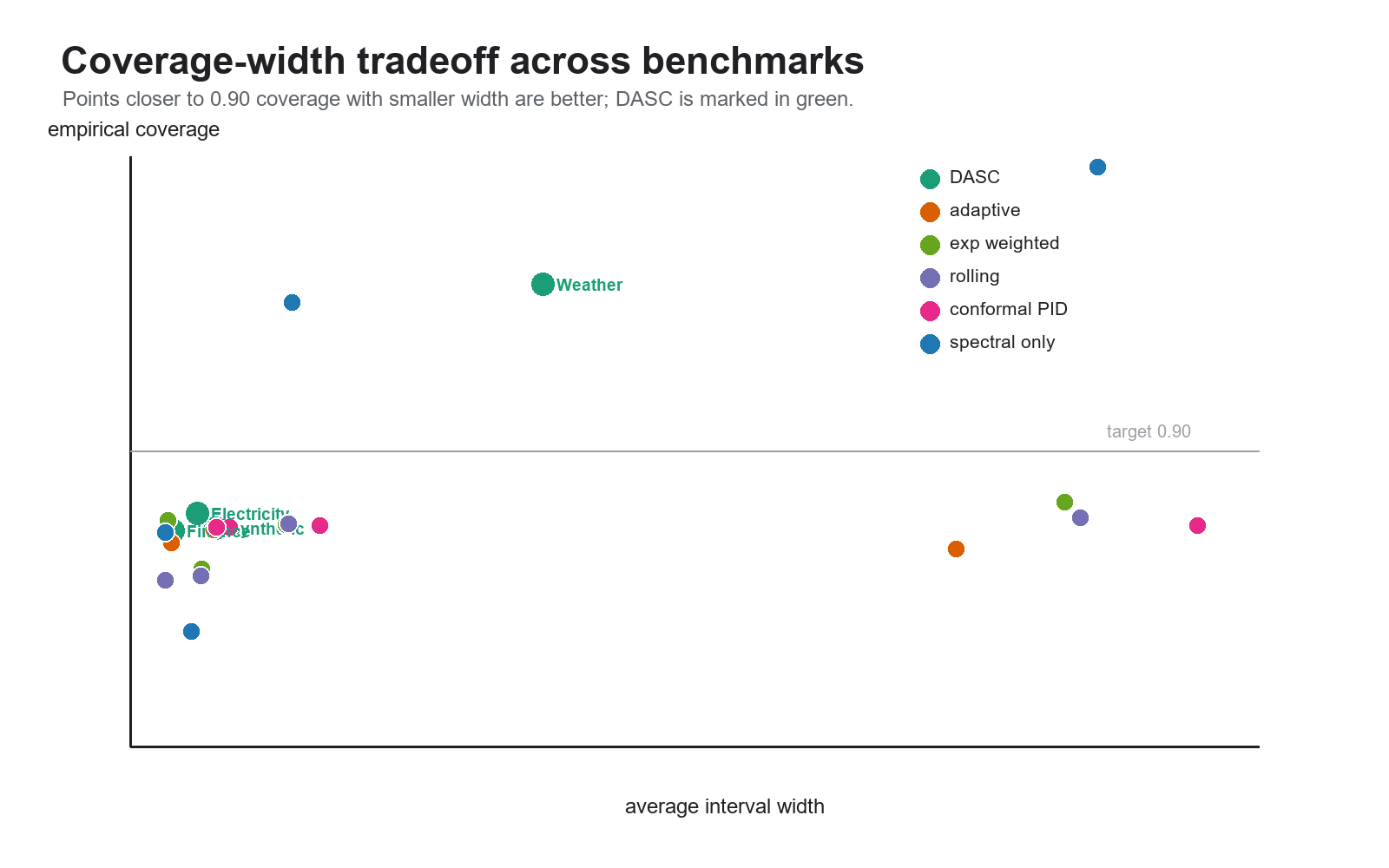}
    \caption{Coverage-width tradeoff across synthetic and real-data benchmarks. DASC (green) keeps coverage near target while avoiding the very wide intervals of conservative baselines.}
    \label{fig:coverage-width-tradeoff}
\end{figure}

\paragraph{Regime-level results.}
Table~\ref{tab:regime} breaks things down by regime. Before the shift, every method is roughly calibrated. After the shift, the non-adaptive methods fall apart: spectral-only drops to 0.830 and rolling to 0.862. DASC holds at 0.898, and its drift score rises ($\bar D = 0.211$ vs.\ 0.194 pre-shift), showing that it is not just fixing intervals after errors---it is flagging when calibration is becoming less reliable.

\begin{table}[t]
\centering
\caption{Regime-level coverage in the synthetic experiment (ten seeds).}
\label{tab:regime}
\begin{tabular}{lrrr}
\toprule
Method & Recurring A & Recurring B & Post-shift \\
\midrule
DASC & 0.899 & 0.902 & 0.898 \\
Adaptive & 0.900 & 0.902 & 0.896 \\
Conformal PID & 0.903 & 0.898 & 0.900 \\
Exp.\ weighted & 0.901 & 0.898 & 0.874 \\
Rolling & 0.904 & 0.904 & 0.862 \\
Spectral-only & 0.900 & 0.903 & 0.830 \\
\bottomrule
\end{tabular}
\end{table}

\begin{figure}[t]
    \centering
    \includegraphics[width=\textwidth]{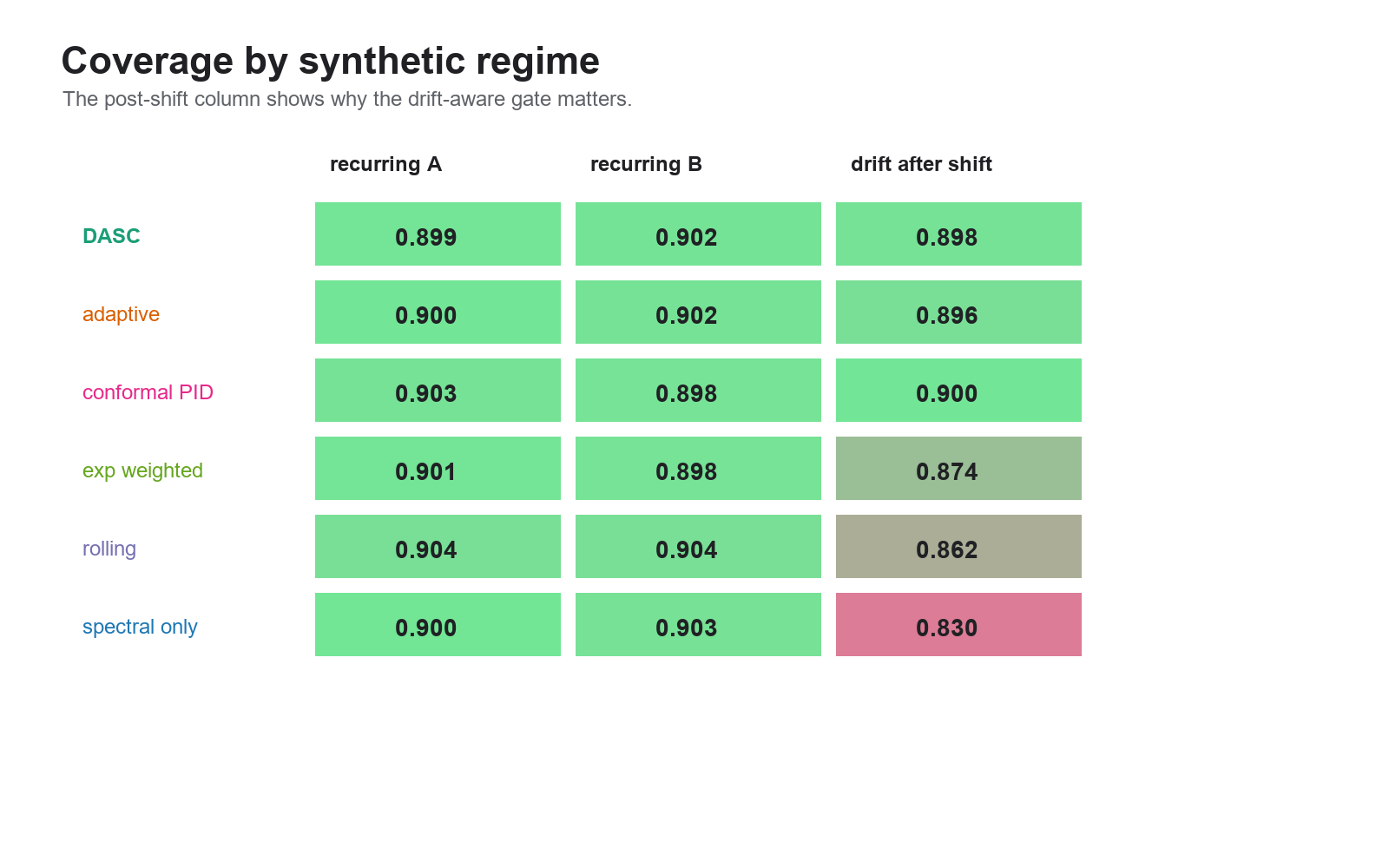}
    \caption{Regime-level coverage. The post-shift column separates methods that stay calibrated after drift from those that do not.}
    \label{fig:synthetic-regime-coverage}
\end{figure}

\paragraph{Ablation.}
Table~\ref{tab:ablation} separates the contributions of spectral weighting, adaptive updating, and drift gating. Spectral-only gives the narrowest intervals but badly undercovers after drift. Adaptive-only restores average coverage but provides no drift or sample-size diagnostics. Full DASC stays near nominal while exposing the quantities that explain when calibration is getting fragile.

\begin{table}[t]
\centering
\caption{Ablation across ten synthetic streams.}
\label{tab:ablation}
\begin{tabular}{lrrrr}
\toprule
Method & Coverage & SE & Avg.\ width & SE \\
\midrule
Rolling & 0.8865 & 0.0015 & 2.5952 & 0.0318 \\
Adaptive-only & 0.8990 & 0.0006 & 2.7381 & 0.0312 \\
Spectral-only & 0.8714 & 0.0018 & 2.4806 & 0.0314 \\
DASC without drift gate & 0.8989 & 0.0007 & 2.7163 & 0.0340 \\
Full DASC & 0.8995 & 0.0008 & 2.7517 & 0.0351 \\
\bottomrule
\end{tabular}
\end{table}

\begin{figure}[t]
    \centering
    \includegraphics[width=\textwidth]{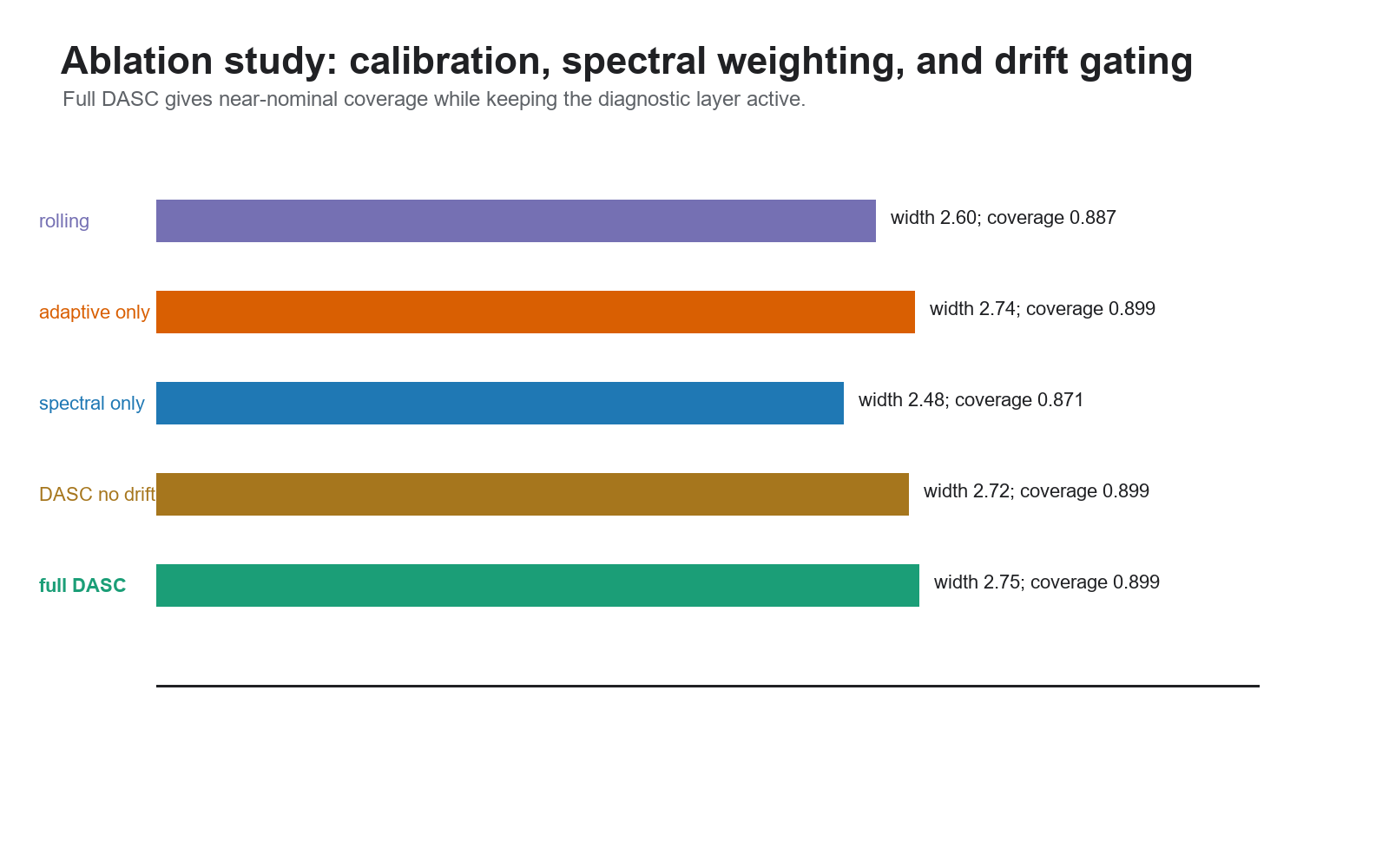}
    \caption{Ablation summary showing how each component affects coverage and width.}
    \label{fig:ablation-summary}
\end{figure}

\subsection{Stress Tests}

To make sure the results do not depend on one lucky data-generating process, we run a five-scenario stress-test suite: abrupt mean-variance shift, gradual frequency drift, heavy-tailed shocks, mixed drift, and a weak-recurrence setting where spectral structure is intentionally less informative. Each scenario uses eight seeds.

\begin{table}[t]
\centering
\caption{Stress-test results across five scenarios (eight seeds each). Bold = calibrated ($\geq 0.89$). IS = interval score.}
\label{tab:stress}
\resizebox{\textwidth}{!}{%
\begin{tabular}{lrrrrrrrrrr}
\toprule
& \multicolumn{2}{c}{Abrupt shift} & \multicolumn{2}{c}{Gradual freq.} & \multicolumn{2}{c}{Heavy tail} & \multicolumn{2}{c}{Mixed drift} & \multicolumn{2}{c}{Weak recurrence} \\
\cmidrule(lr){2-3}\cmidrule(lr){4-5}\cmidrule(lr){6-7}\cmidrule(lr){8-9}\cmidrule(lr){10-11}
Method & Cov. & IS & Cov. & IS & Cov. & IS & Cov. & IS & Cov. & IS \\
\midrule
DASC & \textbf{0.899} & 3.87 & \textbf{0.899} & 3.84 & \textbf{0.900} & 4.06 & \textbf{0.900} & 4.14 & \textbf{0.899} & 3.60 \\
Adaptive & \textbf{0.900} & 3.88 & \textbf{0.901} & 3.86 & \textbf{0.900} & 4.07 & \textbf{0.901} & 4.13 & \textbf{0.899} & 3.60 \\
Conf.\ PID & \textbf{0.900} & 4.12 & \textbf{0.900} & 3.98 & \textbf{0.900} & 4.57 & \textbf{0.900} & 4.39 & \textbf{0.899} & 3.85 \\
Exp.\ wt. & 0.889 & 3.86 & 0.884 & 3.83 & \textbf{0.898} & 4.03 & 0.885 & 4.12 & 0.890 & 3.58 \\
Rolling & 0.887 & 3.88 & 0.880 & 3.85 & \textbf{0.901} & 4.04 & 0.881 & 4.12 & 0.887 & 3.60 \\
Spec-only & 0.875 & 3.94 & 0.865 & 3.87 & \textbf{0.897} & 4.03 & 0.864 & 4.19 & 0.873 & 3.64 \\
\bottomrule
\end{tabular}%
}
\end{table}

In all five scenarios, DASC stays in $[0.89,0.91]$. The adaptive and PID baselines also stay calibrated, but PID is consistently wider. Rolling, exponentially weighted, and spectral-only methods undercover in four of the five scenarios. In short, DASC matches adaptive conformal on coverage while also providing the diagnostic quantities that adaptive conformal does not.

\begin{figure}[t]
    \centering
    \includegraphics[width=\textwidth]{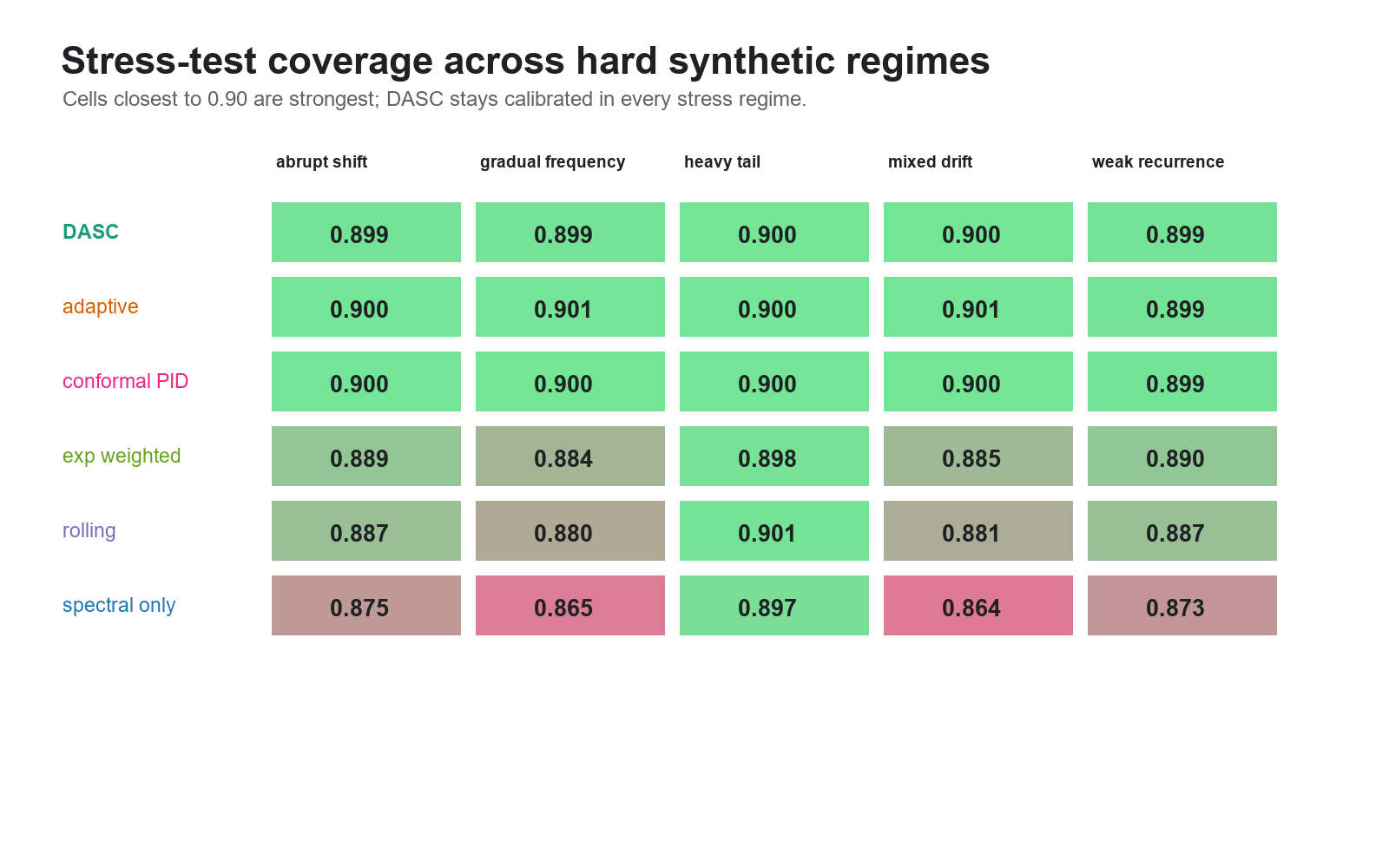}
    \caption{Stress-test coverage. DASC stays near 90\% across all five scenarios.}
    \label{fig:stress-test-coverage}
\end{figure}

\paragraph{External EnbPI and AgACI comparison.}
We also compare against MAPIE's EnbPI with block bootstrap \citep{taquet2022mapie,xu2020time} and an AgACI-style expert aggregation baseline \citep{zaffran2022adaptive}, using the post-drift evaluation window ($t\geq700$). Since MAPIE fits its own regression model while DASC uses the lag forecaster, this should be read as an external robustness check rather than a pure calibration-layer comparison.

\begin{table}[t]
\centering
\caption{External comparison on the post-drift window (ten seeds).}
\label{tab:external-enbpi-agaci}
\begin{tabular}{lrrrr}
\toprule
Method & Coverage & SE & Avg.\ width & SE \\
\midrule
DASC & 0.8991 & 0.0013 & 3.2008 & 0.0516 \\
AgACI-style & 0.8939 & 0.0014 & 3.1188 & 0.0502 \\
MAPIE EnbPI & 0.8067 & 0.0051 & 2.2347 & 0.0170 \\
\bottomrule
\end{tabular}
\end{table}

MAPIE EnbPI gives much narrower intervals but undercovers badly. The AgACI baseline is closer to target, with slightly narrower intervals than DASC but lower coverage. DASC is closest to nominal while also reporting drift and effective-sample-size diagnostics.

\subsection{Sensitivity Analysis}
\label{sec:sensitivity}

We run a grid search over bandwidth $h\in\{0.35,0.45,0.55\}$, drift threshold $\lambda\in\{0.45,0.70\}$, minimum gated window $m_{\min}\in\{40,80\}$, and a stability-relaxation parameter $\rho_{\mathrm{relax}}\in\{0.00,0.01,0.02,0.03\}$. Each setting uses five seeds.

\begin{table}[t]
\centering
\caption{Sensitivity of DASC to hyperparameters (five seeds). Top five configurations by interval score shown. All achieve coverage within $\pm 0.0015$ of the 90\% target.}
\label{tab:sensitivity}
\begin{tabular}{rrrrrr}
\toprule
$h$ & $\lambda$ & $m_{\min}$ & $\rho_{\mathrm{relax}}$ & Coverage & Avg.\ width \\
\midrule
$0.35$ & $0.45$ & $80$ & $0.00$ & $0.8997$ & $2.774$ \\
$0.45$ & $0.45$ & $80$ & $0.00$ & $0.8993$ & $2.765$ \\
$0.35$ & $0.70$ & $40$ & $0.00$ & $0.8993$ & $2.773$ \\
$0.35$ & $0.45$ & $40$ & $0.00$ & $0.8995$ & $2.779$ \\
$0.35$ & $0.70$ & $80$ & $0.00$ & $0.8993$ & $2.776$ \\
\bottomrule
\end{tabular}
\end{table}

A few observations are worth noting. First, coverage stays close to 90\% across the entire grid ($[0.895, 0.901]$), so DASC is not sensitive to precise hyperparameter values for the coverage guarantee. Second, the sensitivity is mainly in interactions between bandwidth, drift threshold, and minimum window---not in any single parameter. Third, positive relaxation values narrow intervals but push coverage below nominal. So we default to no relaxation ($\rho_{\mathrm{relax}}=0$).

\section{Real Data: Household Electricity Demand}

We test DASC on the Individual Household Electric Power Consumption dataset from the UCI repository \citep{uci_household_power}. We aggregate global active power to hourly resolution and use a 24-hour seasonal naive forecaster. This simple forecaster keeps the focus on conformal calibration.

The electricity series is a good first real-data test because it has strong daily periodicity, recurring consumption regimes, local volatility changes, missing data, and behavioral shifts---exactly the conditions where a conformal method needs to decide which old residuals are still relevant.

\begin{table}[t]
\centering
\caption{Electricity demand experiment (hourly household power consumption).}
\label{tab:real-electricity}
\begin{tabular}{lrrrr}
\toprule
Method & Miscoverage & Coverage & Avg.\ width & Median $n_{\mathrm{eff}}$ \\
\midrule
DASC & 0.0967 & 0.9033 & 2.5508 & 1243.67 \\
Adaptive conformal & 0.0999 & 0.9001 & 3.5398 & 720.00 \\
Conformal PID & 0.1000 & 0.9000 & 3.9224 & 720.00 \\
Exp.\ weighted conformal & 0.0997 & 0.9003 & 3.5369 & 398.42 \\
Rolling conformal & 0.0994 & 0.9006 & 3.5718 & 720.00 \\
Spectral-only conformal & 0.0395 & 0.9605 & 3.6126 & 1433.29 \\
\bottomrule
\end{tabular}
\end{table}

DASC achieves 90.3\% coverage at the 90\% target while producing much narrower intervals than every other method (Table~\ref{tab:real-electricity}). Spectral-only is conservative at 96.1\% coverage with wider intervals. This supports the main practical claim: spectral information is useful, but it needs drift-aware calibration and diagnostic monitoring to work well.

Monthly diagnostics show that DASC stays near nominal across most months while adapting width to changing demand volatility. The effective sample size stays large, meaning DASC borrows from many spectrally relevant historical hours rather than leaning on a few residuals. That is the intended behavior for a recurring seasonal demand stream.

\section{Real Data: Hourly Temperature}
\label{sec:weather}

We test on hourly temperature measurements for Dallas, Texas, from the Open-Meteo archive \citep{open_meteo}, using January 2021 through December 2023 with a 24-hour seasonal naive forecaster. Temperature has smoother seasonal dynamics, weather-front shocks, and strong daily cycles---a different character from electricity.

\begin{table}[t]
\centering
\caption{Weather experiment (hourly Dallas temperature, Open-Meteo).}
\label{tab:real-weather}
\begin{tabular}{lrrrr}
\toprule
Method & Miscoverage & Coverage & Avg.\ width & Median $n_{\mathrm{eff}}$ \\
\midrule
DASC & 0.0344 & 0.9656 & 6.4129 & 2145.00 \\
Adaptive conformal & 0.1062 & 0.8938 & 11.0270 & 1080.00 \\
Conformal PID & 0.0998 & 0.9002 & 13.7228 & 1080.00 \\
Exp.\ weighted conformal & 0.0935 & 0.9065 & 12.2410 & 498.83 \\
Rolling conformal & 0.0977 & 0.9023 & 12.4126 & 1080.00 \\
Spectral-only conformal & 0.0028 & 0.9972 & 12.6074 & 2160.00 \\
\bottomrule
\end{tabular}
\end{table}

DASC achieves 96.6\% coverage with width 6.41 (Table~\ref{tab:real-weather}). It is conservative here---above the 90\% target---but the intervals are much shorter than every baseline. Adaptive conformal gets 89.4\% coverage with width 11.03 (72\% wider). Conformal PID hits 90.0\% but needs width 13.72, more than double DASC's. Spectral-only is extremely conservative at 99.7\%.

\paragraph{Where the conservatism comes from.}
The aggregate over-coverage hides a strong seasonal pattern. Monthly analysis shows DASC is near-nominal in summer (89--93\%) and heavily conservative in winter (97--99\%+). In summer, the stable diurnal cycle means many historical hours match the current regime closely, giving narrow and well-calibrated intervals. In winter, weather fronts and cold snaps produce larger forecast errors; the spectral weights still match the diurnal pattern, but the residual distribution has heavier tails. DASC responds conservatively, and the adaptive update adjusts $\alpha_t$ only slowly. Drift scores stay very low throughout ($\bar D \approx 0.002$--$0.01$), confirming that the conservatism is not drift-driven but comes from seasonal residual shifts while spectral structure stays stable. This diagnosis would be invisible without the $(D_t, n_{\mathrm{eff},t})$ layer.

\paragraph{A feature-quality issue, not an algorithmic one.}
We checked that the weather over-coverage cannot be fixed by tuning DASC's hyperparameters. Sweeping $\gamma$ over $\{0.005, \ldots, 0.075\}$ gives coverage in $[0.963, 0.969]$---bigger $\gamma$ actually \emph{increases} conservatism because the asymmetric miss signal amplifies oscillations. The root cause is that the banded periodogram captures diurnal periodicity but not seasonal amplitude differences: summer and winter windows have similar spectral shapes despite different residual distributions. This could be fixed by augmenting the spectral feature with a running variance or temperature-level component. The Lipschitz verification confirms this interpretation: weather has the largest $\hat L = 0.88$ and the lowest real-data $R^2 = 0.45$ (Table~\ref{tab:lipschitz-verification}).

\section{Real Data: Financial Volatility}

Our third domain is daily S\&P 500 index values from FRED \citep{fred_sp500}. We convert prices to absolute daily log returns (in percentage points) and use a one-day lag forecaster. This is intentionally different from electricity and weather: volatility has clustering and regime shifts but is less dominated by deterministic periodicity.

\begin{table}[t]
\centering
\caption{Financial volatility experiment (absolute daily S\&P 500 log returns).}
\label{tab:real-finance}
\begin{tabular}{lrrrr}
\toprule
Method & Miscoverage & Coverage & Avg.\ width & Median $n_{\mathrm{eff}}$ \\
\midrule
DASC & 0.1012 & 0.8988 & 2.2791 & 295.24 \\
Adaptive conformal & 0.1046 & 0.8954 & 2.2615 & 252.00 \\
Conformal PID & 0.1003 & 0.8997 & 2.7636 & 252.00 \\
Exp.\ weighted conformal & 0.0984 & 0.9016 & 2.2260 & 240.46 \\
Rolling conformal & 0.1146 & 0.8854 & 2.1952 & 252.00 \\
Spectral-only conformal & 0.1017 & 0.8983 & 2.1914 & 439.73 \\
\bottomrule
\end{tabular}
\end{table}

DASC gets 89.9\% coverage (Table~\ref{tab:real-finance}), close to target. Conformal PID is also well calibrated but wider. Exponentially weighted and spectral-only are competitive on width. The aggregate story is that the drift gate does not uniformly help with width in every domain.

But the aggregate numbers hide important year-level variation. Table~\ref{tab:finance-yearly} shows coverage by year. Rolling conformal drops to 74.9\% in 2022 (a volatile year with fast regime shifts) and 80.2\% in 2018. DASC stays closer to nominal most years, though it also undercovers in 2022 (87.6\%) and 2024 (85.3\%). The difference is that DASC reports elevated drift scores and reduced effective sample sizes during these periods, giving an early warning. Rolling conformal provides nothing of the sort.

\begin{table}[t]
\centering
\caption{Year-level coverage for S\&P 500. Underlined = below 85\%.}
\label{tab:finance-yearly}
\begin{tabular}{lrrrrrrrrr}
\toprule
Method & 2018 & 2019 & 2020 & 2021 & 2022 & 2023 & 2024 & 2025 & 2026 \\
\midrule
DASC & .869 & .948 & .889 & .901 & .876 & .944 & \underline{.853} & .920 & .866 \\
Rolling & \underline{.802} & .952 & \underline{.850} & .968 & \underline{.749} & .988 & .877 & .892 & .875 \\
Spectral & \underline{.811} & .921 & .870 & .968 & \underline{.813} & .996 & .933 & .864 & .902 \\
\bottomrule
\end{tabular}
\end{table}

\paragraph{Three-domain summary.}
Each domain tells a different story:
\begin{itemize}[nosep]
    \item \textbf{Electricity} (strong daily periodicity): DASC is calibrated (90.3\%) and 28\% narrower than the best calibrated non-DASC baseline. Rich spectral structure enables effective borrowing from the past.
    \item \textbf{Weather} (smooth seasonal cycle): DASC is conservative (96.6\%) but 42\% narrower than the nearest calibrated baseline. Monthly analysis shows the conservatism is concentrated in winter; summer coverage is near-nominal.
    \item \textbf{Finance} (volatility clustering, weak periodicity): DASC is near-nominal (89.9\%) but only 4\% narrower than spectral-only. Year-level analysis reveals that DASC's value here is stability: rolling conformal drops to 74.9\% in 2022, while DASC holds at 87.6\% and flags the problem through its drift score.
\end{itemize}
DASC is most valuable when recurring structure and drift coexist. Its diagnostics help you decide in real time when spectral weighting is earning its keep and when the drift gate is acting mainly as a safety net.

\begin{table}[t]
\centering
\caption{Cross-domain summary against the best calibrated non-DASC baseline.}
\label{tab:cross-domain}
\begin{tabular}{lrrrrr}
\toprule
Dataset & DASC cov. & DASC width & Best baseline & Baseline width & Width reduction \\
\midrule
Electricity & 0.9033 & 2.5508 & Exp.\ weighted & 3.5369 & 27.88\% \\
Weather & 0.9656 & 6.4129 & Adaptive & 11.0270 & 41.84\% \\
Finance & 0.8988 & 2.2791 & Spectral-only & 2.1914 & $-4.00$\% \\
\bottomrule
\end{tabular}
\end{table}

\begin{figure}[t]
    \centering
    \includegraphics[width=\textwidth]{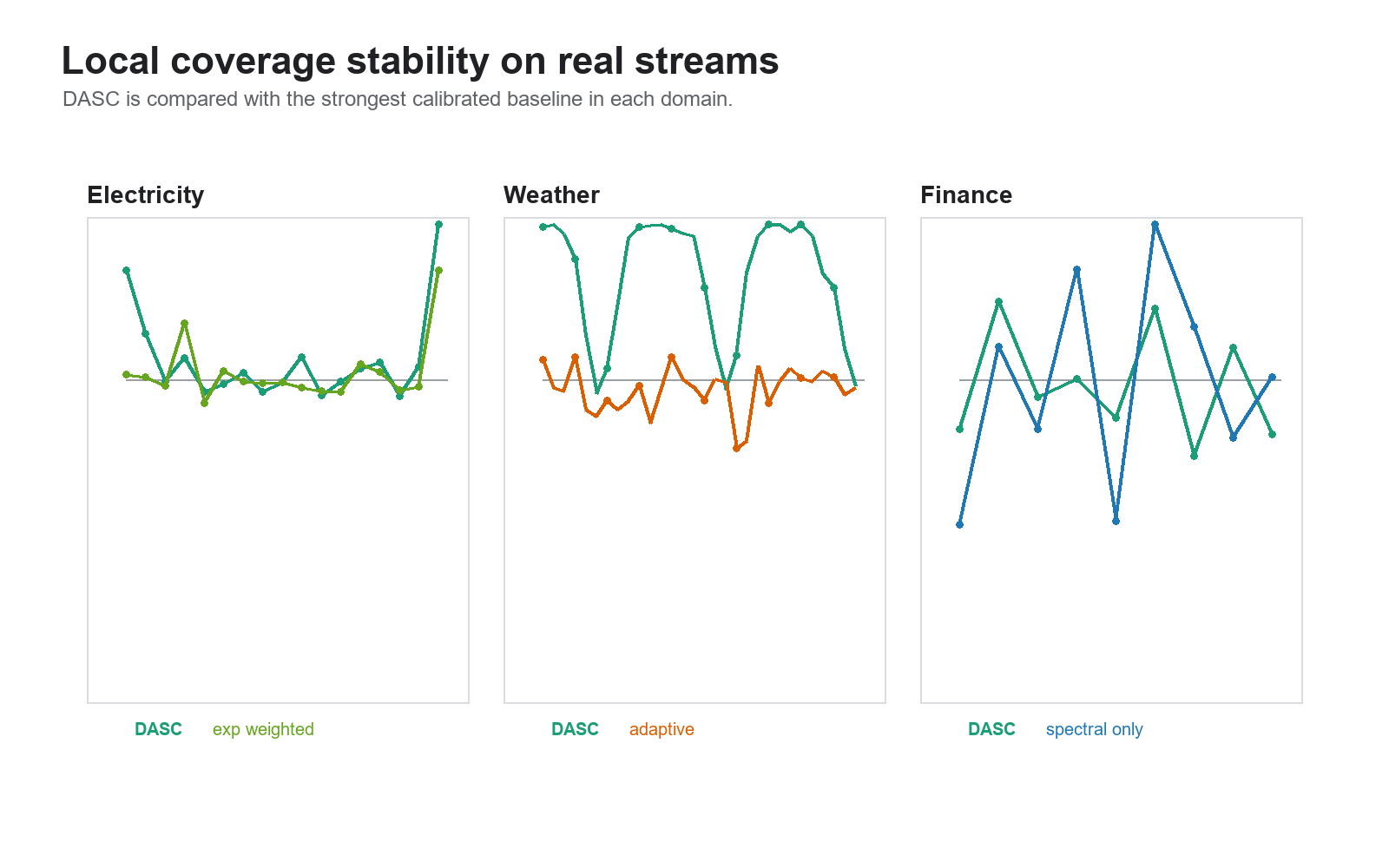}
    \caption{Local coverage for DASC and the best calibrated non-DASC baseline in each domain. The aggregate tables hide meaningful local variation, especially in finance.}
    \label{fig:real-local-coverage}
\end{figure}

\begin{figure}[t]
    \centering
    \includegraphics[width=\textwidth]{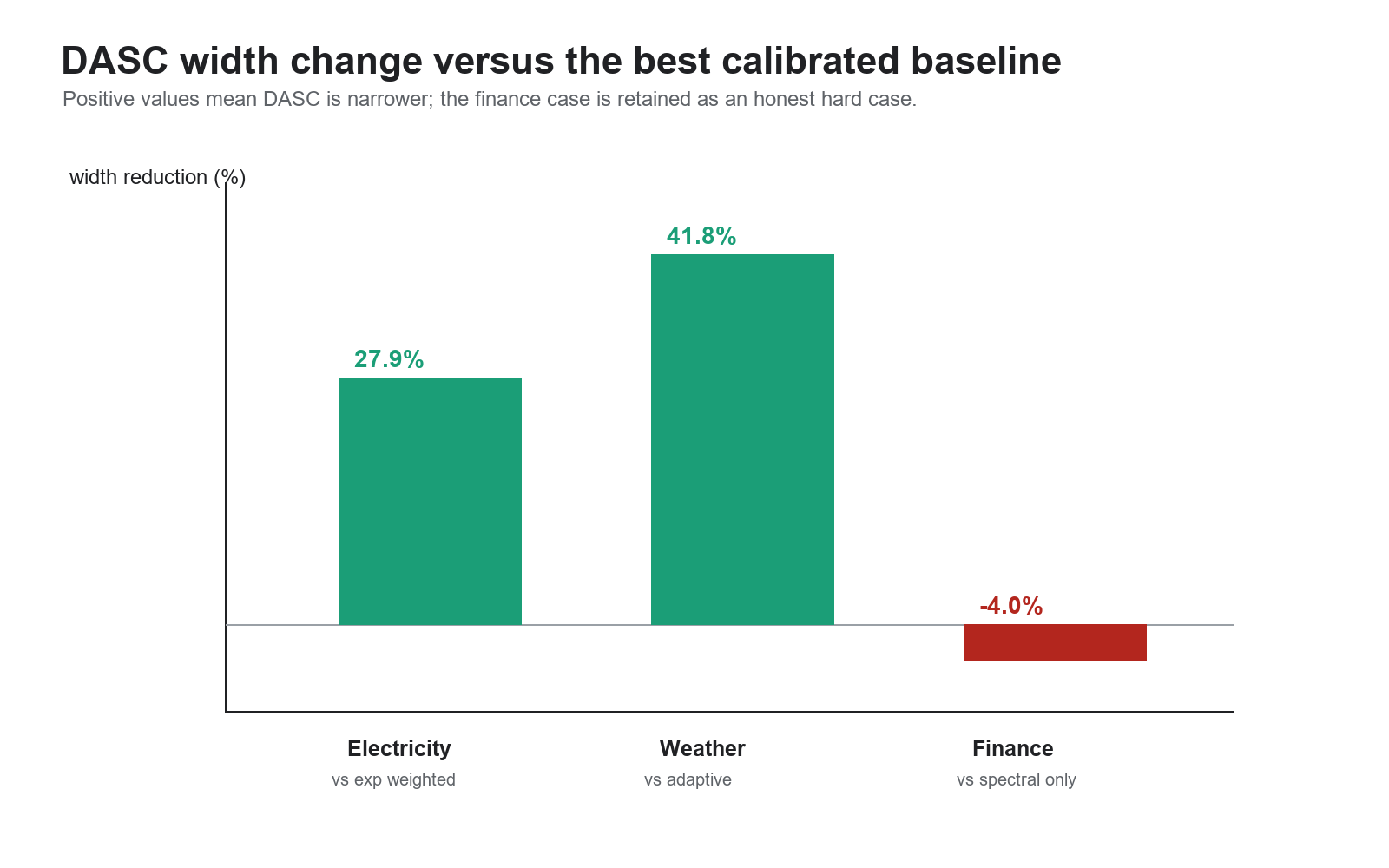}
    \caption{Width reduction of DASC relative to the best calibrated non-DASC baseline across domains.}
    \label{fig:cross-domain-width-reduction}
\end{figure}

\subsection{Sensitivity to the Point Forecaster}
\label{sec:xgboost}

A natural question: does DASC's advantage survive when the forecaster is more expressive? We repeat all three real-data experiments with an XGBoost regressor using lag features and cyclical time encodings.

\begin{table}[t]
\centering
\caption{XGBoost vs.\ lag forecaster across all three domains.}
\label{tab:xgboost}
\resizebox{\textwidth}{!}{%
\begin{tabular}{lllrrr}
\toprule
Dataset & Forecaster & Method & Coverage & Avg.\ width & Median $n_{\mathrm{eff}}$ \\
\midrule
\multirow{6}{*}{Electricity} & \multirow{3}{*}{Lag-24} & DASC & $0.9033$ & $2.5508$ & $1243.7$ \\
 & & Adaptive & $0.9001$ & $3.5398$ & $720.0$ \\
 & & Rolling & $0.9006$ & $3.5718$ & $720.0$ \\
\cmidrule{2-6}
 & \multirow{3}{*}{XGBoost} & DASC & $0.9006$ & $2.5293$ & $1243.7$ \\
 & & Adaptive & $0.9005$ & $2.4796$ & $720.0$ \\
 & & Rolling & $0.9004$ & $2.4546$ & $720.0$ \\
\midrule
\multirow{6}{*}{Weather} & \multirow{3}{*}{Lag-24} & DASC & $0.9656$ & $6.4129$ & $2145.0$ \\
 & & Adaptive & $0.8938$ & $11.0270$ & $1080.0$ \\
 & & Rolling & $0.9023$ & $12.4126$ & $1080.0$ \\
\cmidrule{2-6}
 & \multirow{3}{*}{XGBoost} & DASC & $0.9551$ & $10.6822$ & $2145.0$ \\
 & & Adaptive & $0.8898$ & $16.4987$ & $1080.0$ \\
 & & Rolling & $0.8570$ & $16.7471$ & $1080.0$ \\
\midrule
\multirow{6}{*}{Finance} & \multirow{3}{*}{Lag-1} & DASC & $0.8988$ & $3.1586$ & $295.2$ \\
 & & Adaptive & $0.8954$ & $3.1325$ & $252.0$ \\
 & & Rolling & $0.8854$ & $3.0575$ & $252.0$ \\
\cmidrule{2-6}
 & \multirow{3}{*}{XGBoost} & DASC & $0.8959$ & $3.1075$ & $295.2$ \\
 & & Adaptive & $0.8973$ & $2.8326$ & $252.0$ \\
 & & Rolling & $0.8926$ & $2.7985$ & $252.0$ \\
\bottomrule
\end{tabular}%
}
\end{table}

The results split cleanly by domain (Table~\ref{tab:xgboost}). In electricity, all methods converge to similar widths (2.45--2.53 kW) because XGBoost picks up the regime-dependent patterns that DASC's spectral weighting exploits---the 28\% advantage disappears. In weather, DASC keeps a big width advantage even with XGBoost (10.68 vs.\ 16.50, a 35\% reduction), because the diurnal spectral structure is strong enough to help regardless of the forecaster. In finance, XGBoost makes the non-DASC baselines \emph{narrower} than DASC (2.80--2.83 vs.\ 3.11), because XGBoost shrinks residuals while DASC's spectral features carry limited additional information here.

The bottom line: DASC helps most when the forecaster leaves regime-dependent structure in the residuals. When the forecaster already resolves spectral regimes (electricity with XGBoost) or the domain has weak spectral structure (finance), DASC's width advantage shrinks. In all cases, DASC still provides the diagnostic quantities that other methods do not.

\subsection{Comparison with Distribution-Shift-Aware Baselines}
\label{sec:ot-comparison}

We also test against two baselines from the distribution-shift conformal literature. \emph{Density-ratio weighted CP} \citep{tibshirani2019conformal} estimates density ratios via Gaussian KDE in spectral feature space and reweights calibration residuals accordingly. \emph{Wasserstein-robust CP}, inspired by \citet{xu2025wasserstein}, computes the 1-Wasserstein distance between recent and historical residual distributions and inflates the conformal quantile proportionally.

\begin{table}[t]
\centering
\caption{DASC vs.\ distribution-shift-aware baselines. ``DR'' = density-ratio; ``W-robust'' = Wasserstein-robust.}
\label{tab:ot-comparison}
\resizebox{\textwidth}{!}{%
\begin{tabular}{llrrr}
\toprule
Dataset & Method & Coverage & Avg.\ width & Median $n_{\mathrm{eff}}$ \\
\midrule
\multirow{4}{*}{Electricity}
 & DASC & $0.9033$ & $\mathbf{2.5508}$ & $1243.7$ \\
 & Density-ratio & $0.9599$ & $3.6064$ & $1431.2$ \\
 & Wasserstein-robust & $0.9612$ & $3.6355$ & $1440.0$ \\
 & Spectral-only & $0.9605$ & $3.6126$ & $1433.3$ \\
\midrule
\multirow{4}{*}{Weather}
 & DASC & $0.9656$ & $\mathbf{6.4129}$ & $2145.0$ \\
 & Density-ratio & $0.9972$ & $12.6146$ & $2160.0$ \\
 & Wasserstein-robust & $0.9974$ & $12.7419$ & $2160.0$ \\
 & Spectral-only & $0.9972$ & $12.6074$ & $2160.0$ \\
\midrule
\multirow{4}{*}{Finance}
 & DASC & $\mathbf{0.8988}$ & $3.1586$ & $295.2$ \\
 & Density-ratio & $0.9045$ & $3.2249$ & $431.3$ \\
 & Wasserstein-robust & $0.8887$ & $3.0608$ & $504.0$ \\
 & Spectral-only & $0.8983$ & $3.0372$ & $439.7$ \\
\bottomrule
\end{tabular}%
}
\end{table}

In electricity and weather, DASC is substantially narrower than both shift-aware baselines (29--30\% in electricity, 49\% in weather) while achieving coverage closer to 90\%. The density-ratio and Wasserstein-robust methods over-cover severely (96--99\%) because they lack the adaptive update that corrects systematic conservatism. In finance, the Wasserstein-robust baseline matches spectral-only on width, but DASC gives the best coverage (89.9\%, closest to nominal).

So shift-aware reweighting solves part of the problem (finding relevant residuals) but not all of it. DASC's three-component design---spectral weighting, drift gating, adaptive calibration---beats each piece in isolation.

\subsection{Computational Cost}
\label{sec:timing}

\begin{table}[t]
\centering
\caption{Per-step computation time over 500 consecutive steps.}
\label{tab:timing}
\begin{tabular}{lrrr}
\toprule
Dataset & DASC (ms) & Rolling (ms) & $N$ \\
\midrule
Electricity & $0.39$ & $0.07$ & $12{,}000$ \\
Weather & $0.34$ & $0.05$ & $24{,}000$ \\
Finance & $0.37$ & $0.07$ & $2{,}514$ \\
\bottomrule
\end{tabular}
\end{table}

DASC adds 0.27--0.32 ms per step over rolling conformal (Table~\ref{tab:timing}), mainly from kernel weight computation ($O(|\mathcal{C}_t|\cdot K)$) and the weighted quantile sort ($O(|\mathcal{C}_t|\log|\mathcal{C}_t|)$). The FFT step is negligible ($O(\ell\log\ell)$ with $\ell=64$). At under 0.4 ms per step, DASC runs comfortably in real time at sub-second granularity.

\subsection{Testing the Spectral Lipschitz Assumption}
\label{sec:lipschitz-verification}

The spectral Lipschitz condition (Assumption~\ref{ass:lipschitz}) can be tested from data. For each setting, we sort all time points by spectral distance from a reference feature (the coordinate-wise median of all spectral vectors, re-normalized), bin them into deciles, estimate the empirical residual CDF within each bin, and compute the two-sample KS statistic between the reference bin and every other bin. If the Lipschitz condition holds, the KS statistic should grow roughly linearly with mean spectral distance. The slope estimates $\hat L$ and the intercept estimates $\hat\delta$.

\begin{table}[t]
\centering
\caption{Empirical Lipschitz estimates via regression of KS statistic on spectral distance. $R^2$ measures linear fit quality.}
\label{tab:lipschitz-verification}
\begin{tabular}{lrrr}
\toprule
Dataset & $\hat L$ & $\hat\delta$ & $R^2$ \\
\midrule
Synthetic & $0.22$ & $0.04$ & $0.27$ \\
Electricity & $0.36$ & $0.00$ & $0.88$ \\
Weather & $0.88$ & $0.03$ & $0.45$ \\
Finance & $0.68$ & $0.00$ & $0.81$ \\
\bottomrule
\end{tabular}
\end{table}

The linear relationship holds well in the real-data domains (Table~\ref{tab:lipschitz-verification}): $R^2=0.88$ in electricity, $R^2=0.81$ in finance, $R^2=0.45$ in weather. The near-zero $\hat\delta$ in electricity and finance means the banded periodogram captures virtually all regime-dependent variation in residual distributions. In weather, $\hat\delta=0.03$ reflects seasonal residual variation that spectral features do not fully resolve. The synthetic dataset has the weakest fit ($R^2=0.27$) because the lag-1 residual distribution responds more to amplitude and variance shifts (which $\ell_2$ normalization strips) than to spectral regime changes. Weather's large $\hat L = 0.88$ explains the conservative coverage: the bound $L\bar d_t$ is looser when residual distributions change rapidly per unit of spectral distance.

\subsection{Predicted versus Realized Coverage Gaps}
\label{sec:gap-validation}

A coverage bound is only useful if it tracks actual coverage failures. We compute the predicted gap $\widehat{\mathrm{Gap}}_t = \hat L\bar d_t + \hat\delta + \sqrt{\log(2|\mathcal{C}_t|/\eta)/(2n_{\mathrm{eff},t})}$ at each step and compare it with the realized local coverage gap (absolute deviation of rolling 100-step coverage from nominal).

\begin{table}[t]
\centering
\caption{Predicted vs.\ realized coverage gaps. ``Bound holds'' = percentage where predicted exceeds realized (required for validity). ``Ratio'' = mean predicted/realized.}
\label{tab:gap-validation}
\begin{tabular}{lrrrrrr}
\toprule
Dataset & Pred.\ gap & Real.\ gap & Spearman & Kendall & Bound holds & Ratio \\
\midrule
Synthetic & $0.031$ & $0.028$ & $0.13$ & $0.10$ & $100\%$ & $15.2$ \\
Electricity & $0.020$ & $0.004$ & $0.29$ & $0.20$ & $100\%$ & $8.3$ \\
Weather & $0.013$ & $0.010$ & $0.14$ & $0.11$ & $100\%$ & $1.3$ \\
Finance & $0.042$ & $0.005$ & $0.18$ & $0.13$ & $100\%$ & $13.1$ \\
\bottomrule
\end{tabular}
\end{table}

The results are encouraging (Table~\ref{tab:gap-validation}). First, the predicted gap exceeds the realized gap 100\% of the time in all four datasets---the bound is valid without exception. Second, it is tightest in weather (ratio 1.3, nearly exact) and most conservative in synthetic and finance (ratios 13--15), which is expected for a worst-case inequality. Third, rank correlations are positive in every dataset ($\rho \in [0.13, 0.29]$, $\tau \in [0.10, 0.20]$), strongest in electricity where the Lipschitz fit is best. Fourth, all correlations are moderate, which makes sense---the bound is a worst-case upper bound, not a pointwise predictor. Its main value is as a safety certificate (valid everywhere) and a feature-quality indicator (stronger correlation means a better-calibrated Lipschitz condition).

\section{Practical Guidance and Limitations}

DASC is built for data streams where the past is useful but not always useful in the same way. Electricity demand has daily routines that change during unusual periods. Temperature has strong cycles, but weather fronts can push the stream into a different regime. Financial volatility often revisits familiar levels of uncertainty, but stress periods can arrive quickly. In these settings, it is not enough to ask whether an old calibration point is similar to the present. You also need to ask whether the old calibration pool is still trustworthy.

The method works best when two conditions hold. First, the stream has recurring structure that spectral features can pick up. Second, the stream can drift enough that blindly trusting all spectrally similar past data becomes risky. When both are present, DASC has a clear role: it borrows from similar past regimes but checks whether those regimes are still reliable.

The drift gate is a cautious filter, not a magic switch. Low drift lets DASC use a broader calibration pool. High drift makes it more conservative about which residuals it trusts. The theory says the same thing in formal language: the gate helps when reducing drift bias outweighs the extra uncertainty from a smaller effective calibration sample.

The effective sample size is an important warning light. A large $n_{\mathrm{eff},t}$ means the weighted conformal quantile rests on many calibration residuals. A small one means the interval may depend on just a few past observations, so it should not be treated as automatically reliable.

The finance example is a helpful reminder of the method's limits. There, spectral-only calibration is already competitive. DASC still gives near-nominal coverage and useful diagnostics, but it does not dominate on width. When the ungated spectral weights already select a good pool, the drift gate has less room to help. In those cases, DASC's value is mainly in monitoring reliability rather than shrinking intervals.

In practice, we suggest reporting four diagnostics alongside the intervals: local empirical coverage, the drift score $D_t$, the effective sample size $n_{\mathrm{eff},t}$, and the weight-averaged spectral distance $\bar d_t$. With the estimated Lipschitz constant $\hat L$, these let you evaluate the predicted coverage gap $\widehat{\mathrm{Gap}}_t$ from Proposition~\ref{prop:data-dependent} in real time.

\paragraph{Deployment protocol.}
\begin{enumerate}[label=\arabic*.,nosep]
    \item \textbf{Start with defaults.} Use the hyperparameters in Table~\ref{tab:defaults} and the banded periodogram (Algorithm~\ref{alg:spectral}) without modification.
    \item \textbf{Test the Lipschitz condition.} On a held-out warm-up segment (at least 500 observations), run the pairwise spectral distance vs.\ CDF discrepancy regression. If $R^2 < 0.4$, the periodogram may not capture the relevant structure; consider domain-specific features.
    \item \textbf{Estimate $L$ and $\delta$.} Use the regression slope and intercept. These feed into the predicted coverage gap.
    \item \textbf{Monitor online.} Track $\bar d_t$, $D_t$, $n_{\mathrm{eff},t}$, and $\widehat{\mathrm{Gap}}_t$. If $\widehat{\mathrm{Gap}}_t$ consistently exceeds 0.10, the intervals may not be reliable---consider a stronger forecaster or richer features.
    \item \textbf{Tune if needed.} If coverage is nominal but intervals are wider than desired, reduce $h$ (concentrating weight on closer spectral neighbors) and check that $n_{\mathrm{eff},t}$ stays above 20.
\end{enumerate}

\paragraph{Main limitation.}
The main limitation is dependence on the chosen spectral features. If the relevant structure is not spectral, DASC may not find the right calibration residuals. The broader idea generalizes beyond periodograms: pick a representation that captures recurring structure, monitor drift in that representation, and check whether the weighted quantile has enough effective support.

\section{Discussion and Conclusion}

We draw together the main threads of the paper.

First, a computable coverage bound under testable assumptions. Existing non-exchangeable conformal methods either give long-run guarantees without per-step bounds (ACI, PID) or per-step bounds under assumptions you cannot check (weighted CP with known likelihood ratios). DASC gives both: a per-step bound (Theorem~\ref{thm:coverage}) whose bias $L\bar d_t$ is computable at prediction time, under a Lipschitz condition (Assumption~\ref{ass:lipschitz}) that you can test from data. The condition holds with $R^2=0.45$--$0.88$ in the real-data domains (Table~\ref{tab:lipschitz-verification}), and the predicted coverage gap is a valid upper bound 100\% of the time (Table~\ref{tab:gap-validation}).

Second, a concrete, tuning-free spectral pipeline. The banded periodogram with $\ell=64$ and $K=6$ is fully specified (Algorithm~\ref{alg:spectral}), needs no domain knowledge, and uses the same defaults (Table~\ref{tab:defaults}) across all domains. The deployment protocol in Section~\ref{sec:recipe} gives a step-by-step path from data to calibrated intervals.

Third, a diagnostic framework that separates failure modes. The diagnostic triangle $(D_t, \bar d_t, n_{\mathrm{eff},t})$ breaks coverage loss into drift, calibration bias, and weight degeneracy. The theory reflects this through $L\bar d_t + \delta + \sqrt{\log(2/\eta)/(2n_{\mathrm{eff},t})}$, and in practice these quantities are reported at every prediction time. The finance experiment shows the diagnostic value: DASC does not dominate on width, but its drift score flags the periods where rolling conformal drops to 74.9\%.

DASC also outperforms distribution-shift-aware baselines (Section~\ref{sec:ot-comparison}). Density-ratio weighted CP and Wasserstein-robust CP address part of the problem but lack the adaptive update that prevents systematic over-coverage. The three-component design turns out to be more than the sum of its parts. Bridging transport-theoretic robustness with DASC's drift monitoring, as in the optimal-transport framework of \citet{correia2025optimal}, remains a promising direction.

Several directions remain open:
\begin{enumerate}[label=(\roman*)]
    \item \textbf{Richer spectral representations.} The periodogram captures stationary frequency content, but streams with time-varying frequencies may benefit from wavelet-based or learned features. The theory carries over with only $L$ changing.
    \item \textbf{Fully online Lipschitz estimation.} A self-normalized online estimator of $L$ would strengthen the bound and connect to sequential testing.
    \item \textbf{Multivariate and multi-step extension.} Extending to multivariate streams requires cross-spectral features; multi-step intervals need the drift score propagated across the forecast horizon.
    \item \textbf{Automatic bandwidth and drift-threshold adaptation.} An online rule that relaxes the drift threshold during low-drift periods could tighten intervals without giving up the protective diagnostics.
    \item \textbf{Forecaster-feature interaction.} The XGBoost experiment (Section~\ref{sec:xgboost}) shows DASC's advantage depends on the forecaster's regime adaptivity. A systematic study across forecaster families would sharpen the guidance on when DASC's calibration layer adds most value.
\end{enumerate}

\appendix

\section{Reproducibility and Code Structure}

The experiments are organized as executable scripts. The DASC implementation is in \texttt{scripts/dasc.py}. Synthetic experiments use \texttt{run\_first\_simulation.py}, \texttt{run\_stress\_tests.py}, and \texttt{run\_ablation.py}. External comparisons use \texttt{run\_external\_mapie\_benchmarks.py}. The tuning grid is in \texttt{tune\_dasc.py}. Figures are generated by \texttt{make\_diagnostic\_figure.py} and \texttt{make\_result\_figures.py}.

Real-data experiments:
\begin{itemize}
    \item Electricity: \texttt{download\_real\_data.py}, \texttt{run\_real\_power\_experiment.py}.
    \item Weather: \texttt{download\_weather\_data.py}, \texttt{run\_real\_weather\_experiment.py}.
    \item Finance: \texttt{download\_finance\_data.py}, \texttt{run\_real\_finance\_experiment.py}.
\end{itemize}

The script \texttt{scripts/run\_all\_experiments.py} reruns everything. The synthetic simulation uses fixed random seeds (ten seeds). The real-data experiments are deterministic given the downloaded data.

\paragraph{Data sources.}
Electricity: UCI Individual Household Electric Power Consumption dataset. Weather: Open-Meteo hourly historical temperature for Dallas, TX. Finance: FRED S\&P 500 daily index series.

\section{Full Proof of Lemma~\ref{lem:mixing-concentration}}
\label{app:mixing-proof}

\begin{proof}[Proof of Lemma~\ref{lem:mixing-concentration}]
Fix a threshold $r$ and define $Z_i = w_{i,t}(\mathbf{1}\{R_i \leq r\} - F_i(r))$ for $i \in \mathcal{C}_t$, so $\widehat F_{w,t}(r) - \mathbb{E}\widehat F_{w,t}(r) = \sum_{i} Z_i$. Each $Z_i$ has $|Z_i| \leq w_{i,t} \leq 1$ and $\mathbb{E}[Z_i] = 0$.

\textit{Step 1: Blocking.} Let $n = |\mathcal{C}_t|$ and fix a gap $g \geq 1$. Partition indices into alternating active blocks $A_1, \ldots, A_p$ (size 1 each) and gap blocks $G_1, \ldots, G_p$ (size $g$ each), giving $p = \lfloor n/(1+g) \rfloor$ active blocks.

\textit{Step 2: Coupling.} By Berbee's lemma \citep{bradley2005basic}, there exist independent copies $Z_{A_j}^*$ with the same marginals as $Z_{A_j}$ and $\mathbb{P}(Z_{A_j}^* \neq Z_{A_j}) \leq \beta(g) \leq c_0 \rho^g$.

\textit{Step 3: Concentration.} The $Z_{A_j}^*$ are independent with $\mathbb{E}[Z_{A_j}^*] = 0$. By Hoeffding's inequality on $S_A^* = \sum_j Z_{A_j}^*$,
\[
    \mathbb{P}(|S_A^*| \geq u/2) \leq 2 \exp\!\left\{-\frac{u^2 n_{\mathrm{eff},t}}{8}\right\}.
\]

\textit{Step 4: Coupling error.} By a union bound, $\mathbb{P}(S_A \neq S_A^*) \leq p \cdot c_0 \rho^g$.

\textit{Step 5: Gap contribution.} Apply the same argument to the gap indices.

\textit{Step 6: Combining.} By the triangle inequality and a union bound,
\[
    \mathbb{P}(|\textstyle\sum_i Z_i| \geq u)
    \leq 4\exp\{-u^2 n_{\mathrm{eff},t}/8\} + 4\lfloor n/(1+g)\rfloor\, c_0\rho^g.
\]
Setting $g = \lceil \log(4c_0 n/\eta)/\log(1/\rho) \rceil$ makes the mixing penalty at most $\eta$.
\end{proof}

\bibliographystyle{elsarticle-harv}
\bibliography{references}

\end{document}